\documentclass[
reprint,
superscriptaddress,
amsmath,amssymb,
aps,
prl,
floatfix
]{revtex4-2}

\usepackage{braket}
\usepackage{booktabs}
\usepackage{array}
\usepackage{bm}
\usepackage{hyperref}
\usepackage{amsthm}
\newtheorem{theorem}{Theorem}
\newtheorem{proposition}[theorem]{Proposition}

\theoremstyle{remark}

\begin{document}

\title{Embedding Stabilizer Codes and Leakage Correction in Multilevel Quantum Systems}

\author{Ali Abu-Nada}

\affiliation{Sharjah Maritime Academy, United Arab Emirates}

\author{Lian-Ao Wu}
\affiliation{Department of Physics, University of the Basque Country UPV/EHU,
48080 Bilbao, Spain and IKERBASQUE, Basque Foundation for Science, Bilbao 48011, Spain}
\affiliation{EHU Quantum Center, University of the Basque Country UPV/EHU,
Leioa, Biscay 48940, Spain}

\date{\today}

\begin{abstract}
Leakage beyond the computational subspace is a major source of error in
multilevel quantum hardware. We show that any \( [[n,k,d]] \) stabilizer
code can be embedded isometrically into a single \(D\)-dimensional
system while preserving its complete error-correcting structure. We
further derive a necessary and sufficient condition for exact leakage
correction, proving that leakage is correctable precisely when it does
not distinguish between logical states. These results establish a
unified framework for quantum error correction in multilevel quantum
systems.
\end{abstract}

\maketitle

Quantum error correction (QEC) provides the theoretical foundation for reliable quantum information processing by identifying when physical errors can be detected and exactly reversed through suitable recovery operations. Its mathematical cornerstone is the Knill--Laflamme (KL) theorem, which characterizes precisely when a family of errors is correctable by a recovery map \cite{KnillLaflamme1997}. Together with the stabilizer formalism \cite{Gottesman1997,Calderbank1998}, the KL conditions underlie essentially all modern approaches to fault-tolerant quantum computation.

Despite its success, the standard theory of QEC is fundamentally formulated for qubits. Stabilizer codes are defined on tensor-product Hilbert spaces of ideal two-level systems, whereas physical quantum hardware is intrinsically multilevel. Superconducting circuits \cite{Devoret2013,Koch2007}, trapped ions \cite{Cirac1995,Blatt2008}, neutral atoms \cite{Saffman2010,Browaeys2020}, semiconductor spin systems \cite{Loss1998,Russ2017}, and bosonic platforms \cite{Michael2016,Albert2018,Grimsmo2021} all possess additional degrees of freedom beyond the computational subspace used to encode logical qubits. These extra levels actively participate in the system dynamics and give rise to physical error mechanisms that have no counterpart in ideal qubit models.

Among these, leakage is particularly important. Leakage transfers population outside the computational manifold \cite{WuByrdLidar2002,Jing2015,Ghosh2013,Chen2016,McEwen2021}, thereby changing the effective Hilbert space occupied by the encoded state. Unlike conventional errors, which act within a fixed code space, leakage moves quantum information beyond the domain on which a stabilizer code is defined. Consequently, the standard qubit-based framework does not provide a complete description of error correction in realistic multilevel devices. While substantial progress has been made in the detection, mitigation, and suppression of leakage, and recent experiments have demonstrated effective leakage-removal strategies in QEC protocols \cite{Miao2023Leakage}, a fundamental theoretical question remains unresolved: Can the principles of stabilizer quantum error correction be extended to multilevel quantum systems, and if so, what is the exact criterion that determines when leakage is correctable?

In this Letter, we answer both questions. We first prove that every $[[n,k,d]]$ stabilizer code admits an exact isometric embedding into a single $D$-dimensional quantum system while preserving its complete error-correcting structure, including the code space, stabilizer algebra, logical operators, syndrome measurements, recovery maps, and Knill--Laflamme conditions. This establishes a rigorous equivalence between conventional qubit codes and their multilevel realizations and demonstrates that stabilizer QEC possesses a representation-independent structure extending beyond tensor-product qubit architectures.

We then address genuine physical leakage beyond the embedded computational manifold. By decomposing arbitrary physical error operators into embedded and leakage components, we derive a Leakage Knill--Laflamme theorem that provides necessary and sufficient conditions for exact leakage correction. The theorem shows that leakage is exactly correctable if and only if the leakage process acquires no information about the encoded logical state. Equivalently, leakage errors are correctable precisely when they remain indistinguishable on the code space. In this regime, leakage maps logical states into orthogonal syndrome sectors that can be identified and reversed by an explicit recovery operation.

Together, these results establish a unified theory of quantum error correction in multilevel quantum systems. They extend the stabilizer formalism beyond idealized qubit models and provide an exact characterization of correctable leakage. More broadly, they place leakage and conventional errors within a common mathematical framework, thereby providing a rigorous foundation for fault-tolerant quantum information processing in superconducting, atomic, spin-based, and bosonic quantum architectures.

\textit{Embedding qubit codes into multilevel hardware}---
QEC is conventionally formulated in an
$n$-qubit Hilbert space
$\mathcal H_{\mathrm q}=(\mathbb C^2)^{\otimes n}$,
whereas realistic quantum devices are intrinsically multilevel and are
described by a larger Hilbert space $\mathcal H_D$ with
$D\ge 2^n$. We decompose the physical Hilbert space as $
\mathcal H_D
=
\mathcal H_{\mathrm{emb}}
\oplus
\mathcal H_{\mathrm{leak}}$, where $\mathcal H_{\mathrm{emb}}$ is a $2^n$-dimensional computational
manifold and $\mathcal H_{\mathrm{leak}}$ contains the remaining
physical levels. This decomposition distinguishes the computational manifold that carries the encoded quantum information from the additional physical levels that become populated during leakage processes.

To establish an exact correspondence between the qubit
description and the multilevel physical system, we define
the embedding map
$\Phi:\mathcal H_{\mathrm q}\rightarrow\mathcal H_{\mathrm{emb}}$ by

\begin{equation}
\Phi
=
\sum_{m=0}^{2^n-1}
|m\rangle\langle b(m)|,
\qquad
\Phi^\dagger
=
\sum_{m=0}^{2^n-1}
|b(m)\rangle\langle m|,
\label{eq:Phi}
\end{equation}

where $|b(m)\rangle$ denotes the computational basis state associated
with the binary representation of the integer $m$. Thus, $
\Phi |b(m)\rangle = |m\rangle$ establishing a one-to-one correspondence between the qubit basis and
the computational manifold of the multilevel device.
The embedding map satisfies
\begin{equation}
\Phi^\dagger\Phi
=
I_{\mathcal H_{\mathrm q}},
\qquad
\Phi\Phi^\dagger
=
P_{\mathrm{emb}},
\label{eq:isometry}
\end{equation}
where \(I_{\mathcal H_{\mathrm q}}\) denotes the identity operator on the
\(n\)-qubit Hilbert space
\(\mathcal H_{\mathrm q}=(\mathbb C^2)^{\otimes n}\), and
\(P_{\mathrm{emb}}\) is the projector onto the embedded subspace
\(\mathcal H_{\mathrm{emb}}\subseteq\mathcal H_D\).
Equation~(\ref{eq:isometry}) shows that \(\Phi\) is an isometric
embedding of \(\mathcal H_{\mathrm q}\) into the multilevel system,
preserving all inner products and therefore all encoded quantum
information. Consequently, logical states, stabilizer generators,
syndrome measurements, and recovery operations admit equivalent
representations within the multilevel device. The proof is given in
Supplemental Material Sec.~S2.

\textit{Preservation of quantum error correction}--- Having established the embedding map, the next fundamental question is
whether the embedded multilevel realization retains the error-correcting
capability of the original stabilizer code. Since exact QEC  is completely characterized by the KL conditions, it is sufficient to determine whether these conditions are
preserved under the embedding. If they are, then the embedded code is
not merely a representation of the original code but an exact
error-correcting equivalent.

For a stabilizer code
$\mathcal C\subseteq\mathcal H_{\mathrm q}$,
the KL conditions are

\begin{equation}
P_{\mathcal C}
E_a^\dagger
E_b
P_{\mathcal C}
=
\alpha_{ab}
P_{\mathcal C},
\label{eq:KL}
\end{equation}

where $P_{\mathcal C}$ denotes the projector onto the logical code
space, $\{E_a\}$ is the set of correctable error operators, and
$\alpha=[\alpha_{ab}]$ is a Hermitian positive-semidefinite matrix.
These conditions are both necessary and sufficient for the existence of
an exact recovery operation.

Under the embedding, the logical code projector and error operators are
mapped according to

\begin{equation}
\widetilde P_{\mathcal C}
=
\Phi P_{\mathcal C}\Phi^\dagger,
\qquad
\widetilde E_a
=
\Phi E_a\Phi^\dagger.
\end{equation}

The preservation of the KL conditions follows from the isometric nature
of the embedding. As shown in Sec.~S2 of the Supplemental Material, the
embedding map satisfies $\Phi^\dagger\Phi = I_{\mathcal H_q}$.  Consequently, this map  preserves inner products, operator
products, and algebraic relations. In particular, for any operators
$A$ and $B$ acting on the original qubit Hilbert space, $(\Phi A\Phi^\dagger)
(\Phi B\Phi^\dagger) = \Phi AB\Phi^\dagger$, which implies that all operator identities satisfied in the original
qubit description are transferred exactly to the embedded multilevel
description. Therefore, the complete structure underlying QEC is preserved.

The following theorem formalizes this result.

\begin{theorem}\label{thrm1}
Let
$\Phi:\mathcal H_{\mathrm q}\rightarrow\mathcal H_D$
be the isometric embedding map defined above. If the original stabilizer
code satisfies the Knill--Laflamme conditions (\ref{eq:KL}), then the
embedded code
$\widetilde{\mathcal C}=\Phi\mathcal C$
satisfies

\begin{equation}
\widetilde P_{\mathcal C}
\widetilde E_a^\dagger
\widetilde E_b
\widetilde P_{\mathcal C}
=
\alpha_{ab}
\widetilde P_{\mathcal C}.
\label{eq:KL_embedded}
\end{equation}

Hence, the embedded code possesses exactly the same error-correcting
capability as the original stabilizer code.
\end{theorem}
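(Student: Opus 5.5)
The plan is a direct algebraic verification that uses only the isometry relations in Eq.~(\ref{eq:isometry}), namely $\Phi^\dagger\Phi=I_{\mathcal H_{\mathrm q}}$ and $\Phi\Phi^\dagger=P_{\mathrm{emb}}$, together with the multiplicativity $(\Phi A\Phi^\dagger)(\Phi B\Phi^\dagger)=\Phi AB\Phi^\dagger$ noted before the theorem. First I record two preliminary facts. The embedding commutes with the adjoint: $\widetilde E_a^\dagger=(\Phi E_a\Phi^\dagger)^\dagger=\Phi E_a^\dagger\Phi^\dagger$. And $\widetilde P_{\mathcal C}=\Phi P_{\mathcal C}\Phi^\dagger$ is an orthogonal projector with range $\Phi\mathcal C=\widetilde{\mathcal C}$. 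Indeed, it is Hermitian, it is idempotent by multiplicativity, and its range is the image of $\mathcal C$ because $\Phi$ is injective.

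Next I compute the left-hand side of Eq.~(\ref{eq:KL_embedded}) as a product of four embedded operators:
\begin{equation}
\widetilde P_{\mathcal C}\widetilde E_a^\dagger\widetilde E_b\widetilde P_{\mathcal C}
=\Phi P_{\mathcal C}\Phi^\dagger\Phi E_a^\dagger\Phi^\dagger\Phi E_b\Phi^\dagger\Phi P_{\mathcal C}\Phi^\dagger
=\Phi\left(P_{\mathcal C}E_a^\dagger E_bP_{\mathcal C}\right)\Phi^\dagger .
\end{equation}
Each interior factor $\Phi^\dagger\Phi$ collapses to $I_{\mathcal H_{\mathrm q}}$. I then substitute the original Knill--Laflamme condition (\ref{eq:KL}) inside the parentheses. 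This gives $\Phi(\alpha_{ab}P_{\mathcal C})\Phi^\dagger=\alpha_{ab}\widetilde P_{\mathcal C}$, with the same Hermitian positive-semidefinite matrix $\alpha$. That establishes Eq.~(\ref{eq:KL_embedded}).

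To justify the final sentence of the theorem, that the error-correcting capability is exactly the same, I invoke the sufficiency direction of the KL theorem on $\mathcal H_D$. Since the embedded errors and code satisfy the KL conditions with the same $\alpha$, a recovery map exists. Alternatively, I can write the recovery explicitly by conjugating the original recovery Kraus operators $R_r$ to $\Phi R_r\Phi^\dagger$. On the leakage complement, I add one further Kraus operator, for instance $P_{\mathrm{leak}}=I-P_{\mathrm{emb}}$, so that the map remains trace preserving on all of $\mathcal H_D$. Conversely, conjugating by $\Phi^\dagger$ takes embedded KL relations back to the original ones, which shows the equivalence runs in both directions.

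The only point needing care, and the main potential obstacle, is that $\Phi\Phi^\dagger=P_{\mathrm{emb}}\neq I_{\mathcal H_D}$ whenever $D>2^n$. I must therefore check that no step uses $\Phi\Phi^\dagger=I$. In the chain above only $\Phi^\dagger\Phi$ appears in interior positions, so the computation is clean. In the completion of the recovery channel, however, one must explicitly handle the complement $\mathcal H_{\mathrm{leak}}$, because the conjugated recovery Kraus operators satisfy $\sum_r(\Phi R_r\Phi^\dagger)^\dagger(\Phi R_r\Phi^\dagger)=P_{\mathrm{emb}}$ rather than the identity.
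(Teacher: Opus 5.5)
Your proposal is correct and follows the paper's proof: you rewrite the left-hand side as $\Phi\,(P_{\mathcal C}E_a^\dagger E_bP_{\mathcal C})\,\Phi^\dagger$ by collapsing the interior factors $\Phi^\dagger\Phi=I_{\mathcal H_{\mathrm q}}$, then substitute the original KL condition. Your additional remarks on completing the recovery on $\mathcal H_{\mathrm{leak}}$ (via an extra Kraus operator $P_{\mathrm{leak}}$) parallel the paper's Sec.~S3.8, which instead adds a $\operatorname{Tr}(P_{\mathrm{leak}}X)\,\tau$ term; either completion gives a valid trace-preserving channel.
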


\begin{proof}
Using
$\widetilde P_{\mathcal C}
=
\Phi P_{\mathcal C}\Phi^\dagger$,
$\widetilde E_a
=
\Phi E_a\Phi^\dagger$,
and
$\Phi^\dagger\Phi
=
I_{\mathcal H_q}$,

\begin{align}
\widetilde P_{\mathcal C}
\widetilde E_a^\dagger
\widetilde E_b
\widetilde P_{\mathcal C}
&=
\Phi
P_{\mathcal C}
E_a^\dagger
E_b
P_{\mathcal C}
\Phi^\dagger
\nonumber\\
&=
\alpha_{ab}
\Phi P_{\mathcal C}\Phi^\dagger
\nonumber\\
&=
\alpha_{ab}
\widetilde P_{\mathcal C},
\end{align}

where the second line follows directly from the original
KL  conditions.
\end{proof}

 Theorem~\ref{thrm1} shows that the embedding map preserves the complete
error-correcting structure of the original code. Consequently, every
$[[n,k,d]]$ stabilizer code admits an equivalent realization within a
single $D$-dimensional quantum system. Stabilizer generators, logical
operators, syndrome projectors, syndrome measurements, and recovery
maps are transferred according to $\widetilde A = \Phi A\Phi^\dagger$, while preserving all algebraic relations. Detailed derivations of the
stabilizer algebra, logical operators, syndrome projectors, syndrome
measurements, and recovery operations are provided in Sec.~S3 of the
Supplemental Material.

We illustrate
the embedding framework using representative stabilizer codes,
including repetition, nondegenerate, and CSS codes. In each case, the
embedding map transfers the code space, stabilizer generators, logical
operators, syndrome structure, and recovery operations to the multilevel
system exactly as predicted by the general theory. These examples
demonstrate the broad applicability of the framework across standard
classes of stabilizer QEC  codes.

\textit{Example: Three-Qubit Repetition Code.}---
As a simple illustration, consider the three-qubit repetition code with
logical states
\(
|0_L\rangle=|000\rangle
\)
and
\(
|1_L\rangle=|111\rangle,
\)
stabilizer generators
\(
S_1=Z_1Z_2
\)
and
\(
S_2=Z_2Z_3,
\)
and correctable error set
\(
\mathcal E_3=\{I,X_1,X_2,X_3\}.
\)

The embedding map \(\Phi\) identifies the eight computational basis
states with the levels of a single \(D=8\) system. In particular, \(
|000\rangle \leftrightarrow |0\rangle\) and \(
|111\rangle \leftrightarrow |7\rangle\), so that the embedded logical code is \(\widetilde{\mathcal C}
=
\mathrm{span}\{|0\rangle,|7\rangle\}
\). The stabilizers are transferred according to \(
\widetilde S_r=\Phi S_r\Phi^\dagger\),
yielding \(\widetilde S_1
= \operatorname{diag}(1,1,-1,-1,-1,-1,1,1)\) and \(\widetilde S_2
=\operatorname{diag}(1,-1,-1,1,1,-1,-1,1)\).

Similarly, the embedded bit-flip errors are

\begin{equation}
\widetilde X_j
=
\Phi X_j\Phi^\dagger
=
\sum_{m=0}^{7}
|m\oplus2^{\,3-j}\rangle\langle m|,
\qquad
j=1,2,3,
\end{equation}

where \(\oplus\) denotes bitwise exclusive-OR.

Because the embedding is isometric, the syndrome structure, recovery
operations, and Knill--Laflamme conditions are preserved exactly.
Hence the embedded eight-level realization reproduces the complete
error-correcting behavior of the original repetition code. Detailed
syndrome tables and recovery constructions are presented in the
Supplemental Material.

The framework extends directly to the [[5,1,3]] perfect code and to the complete CSS structure of the [[7,1,3]] Steane code. Explicit constructions of the embedded logical states, stabilizer generators, syndrome tables, and corresponding recovery operators are presented in the Supplemental Material.

The three examples above demonstrate the preservation of stabilizer
codes under error processes that remain confined to the embedded
computational manifold. However, multilevel quantum systems possess
additional physical levels beyond \(\mathcal H_{\mathrm{emb}}\), and
physical noise may couple the encoded states to these levels. Such
processes correspond to genuine leakage errors and represent a distinct
error mechanism that is absent in the original qubit description.
Because leakage drives the system outside the embedded code space, its
correctability is not determined by the standard stabilizer
Knill--Laflamme conditions alone. Instead, leakage correction requires
a separate set of conditions, which we formulate below in terms of the
physical leakage operators connecting
\(\mathcal H_{\mathrm{emb}}\) and
\(\mathcal H_{\mathrm{leak}}\).

\textit{Correcting physical leakage beyond the embedded manifold.}---
The embedding theorem establishes that every logical codeword,
syndrome subspace, correctable error sector, and recovery operation of
the original stabilizer code is faithfully transferred to the embedded
manifold \(\mathcal H_{\rm emb}\). Consequently, all errors acting
entirely within \(\mathcal H_{\rm emb}\) are governed by the standard
KL framework.

Real multilevel quantum devices, however, possess additional physical
levels outside \(\mathcal H_{\rm emb}\). Imperfect control,
system--environment interactions, and weak anharmonicity may populate
these levels
\cite{Preskill2018,Fowler2012,Koch2007,McEwen2021,Miao2023Leakage},
producing leakage processes that are absent from the original qubit
description. Unlike ordinary correctable errors, which move states
between syndrome sectors within \(\mathcal H_{\rm emb}\), leakage errors
transfer population out of the embedded manifold altogether. The central
question is therefore whether such excursions merely change the physical
support of the encoded state or reveal information about the underlying
logical state.

To address this question, we decompose the physical Hilbert space as $
\mathcal H_D
=
\mathcal H_{\rm emb}
\oplus
\mathcal H_{\rm leak}$, where \(\mathcal H_{\rm leak}\) denotes the orthogonal complement of the embedded manifold and $
P_{\rm leak}
=
I_D-P_{\rm emb}$ is the corresponding projector. Relative to this decomposition, an
arbitrary physical error admits the block representation

\begin{equation}
F_a=
\begin{pmatrix}
A_a & B_a\\
L_a & C_a
\end{pmatrix},
\label{eq:S_block_error}
\end{equation}

with respect to the ordered basis
\(
\mathcal H_{\rm emb}\oplus\mathcal H_{\rm leak}.
\)
Here

\begin{equation}
A_a
=
P_{\rm emb}F_aP_{\rm emb},
\qquad
L_a
=
P_{\rm leak}F_aP_{\rm emb},
\label{eq:S_emb_leak_blocks}
\end{equation}

describe, respectively, evolution confined to the embedded manifold and
transitions from the embedded manifold into the leakage sector. The
remaining blocks act on states already supported in
\(\mathcal H_{\rm leak}\) and therefore play no role in the evolution
of an initially encoded state.

Let $
\widetilde{\mathcal C}
=
\Phi\mathcal C$ and $
P
\equiv
\widetilde P_{\mathcal C}
=
\Phi P_{\mathcal C}\Phi^\dagger $ denote the embedded code space and its projector, respectively. Since
\(
\widetilde{\mathcal C}
\subseteq
\mathcal H_{\rm emb},
\)
it follows that $
P_{\rm emb}P=P$ and $P_{\rm leak}P=0$. Consequently, $
F_aP
=
F_aP_{\rm emb}P$. Inserting the resolution of the identity $
I_D=P_{\rm emb}+P_{\rm leak}$ on the output side yields

\begin{align}
F_aP
&=
(P_{\rm emb}+P_{\rm leak})
F_aP_{\rm emb}P \nonumber \\
&=
(A_a+L_a)P.
\label{eq:S_error_decomposition_on_code}
\end{align}

Equation~(\ref{eq:S_error_decomposition_on_code}) shows that the action
of any physical error on an encoded state naturally separates into two
distinct contributions: an embedded component \(A_aP\), which remains
within \(\mathcal H_{\rm emb}\) and is governed by the standard
embedded error-correction framework, and a leakage component
\(L_aP\), which transfers population into
\(\mathcal H_{\rm leak}\).

Because $A_aP \subseteq \mathcal H_{\rm emb}$ and $L_aP \subseteq \mathcal H_{\rm leak}$, and the subspaces
\(\mathcal H_{\rm emb}\) and \(\mathcal H_{\rm leak}\) are orthogonal,
the mixed overlap terms vanish identically,

\begin{equation}
PA_a^\dagger L_bP
=
PL_a^\dagger A_bP
=
0.
\label{eq:S_cross_terms}
\end{equation}

The KL overlap therefore decomposes as

\begin{equation}
PF_a^\dagger F_bP
=
PA_a^\dagger A_bP
+
PL_a^\dagger L_bP.
\label{eq:S_KL_split}
\end{equation}

The first contribution is determined entirely by the embedded
stabilizer code and corresponds to ordinary correctable errors within
\(\mathcal H_{\rm emb}\). The second contribution arises exclusively
from transitions into the leakage sector and represents the additional
constraint imposed by physical leakage. The theorem below shows that
exact correction remains possible precisely when this leakage term is
unable to distinguish between logical states.

The relevance of this result extends directly to contemporary
multilevel quantum platforms. In superconducting transmon devices,
for example, weak anharmonicity can induce transitions from the
computational states into higher excited levels, while in bosonic
encodings leakage may populate Fock states outside the designated
code manifold. As shown in the Supplemental Material, the theorem
provides an explicit syndrome construction and recovery procedure for
both embedded stabilizer codes and bosonic encodings.

\begin{theorem}[Exact correction of physical leakage]
\label{thm:main_physical_leakage}

Let \(P=\widetilde P_{\mathcal C}\) be the projector onto the embedded
code space
\(\widetilde{\mathcal C}\subseteq\mathcal H_{\rm emb}\). For each
physical error \(F_a\), define its in-manifold and outward-leakage
components by \(
A_a=P_{\rm emb}F_aP_{\rm emb}\) and \(
L_a=P_{\rm leak}F_aP_{\rm emb}\).
Assume that the in-manifold components satisfy \(
PA_a^\dagger A_bP=\alpha_{ab}P \) for all \(a,b\). Then a recovery operation exists that restores every
state initially encoded in \(\widetilde{\mathcal C}\) after the action
of the error family \(\{F_a\}\) if and only if
\[
PL_a^\dagger L_bP=\beta_{ab}P
\]
for some matrix \(\beta=[\beta_{ab}]\). In this case, the complete
Knill--Laflamme matrix is
\(\gamma=\alpha+\beta\), and \(\beta\) is Hermitian and positive
semidefinite.
\end{theorem}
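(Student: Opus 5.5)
The plan is to reduce the statement to the standard Knill--Laflamme theorem, applied to the full error family $\{F_a\}$ acting on the embedded code space $\widetilde{\mathcal C}\subseteq\mathcal H_D$. The KL theorem is stated in the paper as necessary and sufficient for exact recovery. It is a general fact about any code subspace of any finite-dimensional Hilbert space, so it applies verbatim in $\mathcal H_D$ with projector $P$. Exact recovery of $\{F_a\}$ on $\widetilde{\mathcal C}$ is therefore equivalent to
\[
PF_a^\dagger F_bP=\gamma_{ab}P
\]
for some matrix $\gamma$, which is then automatically Hermitian and positive semidefinite. The whole proof then rests on the block decomposition already derived, Eqs.~(\ref{eq:S_error_decomposition_on_code})--(\ref{eq:S_KL_split}). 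These give $PF_a^\dagger F_bP=PA_a^\dagger A_bP+PL_a^\dagger L_bP$, because the cross terms vanish by the orthogonality of $\mathcal H_{\rm emb}$ and $\mathcal H_{\rm leak}$.

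For sufficiency, I assume $PL_a^\dagger L_bP=\beta_{ab}P$. Combining this with the hypothesis $PA_a^\dagger A_bP=\alpha_{ab}P$ and Eq.~(\ref{eq:S_KL_split}) gives $PF_a^\dagger F_bP=(\alpha_{ab}+\beta_{ab})P$. The KL theorem then yields a recovery map. If an explicit recovery is wanted, I would use the standard construction. Diagonalize $\gamma=\alpha+\beta$ by a unitary $u$, form $G_k=\sum_a u_{ak}F_a$, and take polar isometries $G_kP/\sqrt{d_k}$ onto mutually orthogonal images. This is where the physical picture of leakage landing in orthogonal, identifiable sectors of $\mathcal H_{\rm leak}$ enters. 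For necessity, exact recovery and the KL theorem give $PF_a^\dagger F_bP=\gamma_{ab}P$. Subtracting the assumed in-manifold relation gives $PL_a^\dagger L_bP=(\gamma_{ab}-\alpha_{ab})P$, so $\beta=\gamma-\alpha$ exists and $\gamma=\alpha+\beta$.

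Hermiticity and positivity of $\beta$ come directly from its definition and need no appeal to $\gamma$. Taking adjoints of $PL_a^\dagger L_bP=\beta_{ab}P$ gives $\beta_{ab}^*=\beta_{ba}$. For any vector $c$ and normalized $|\psi\rangle\in\widetilde{\mathcal C}$,
\[
\sum_{a,b}c_a^*c_b\beta_{ab}=\Big\|\sum_b c_bL_b|\psi\rangle\Big\|^2\ge 0,
\]
so $\beta\succeq0$. The same argument shows $\alpha\succeq0$, consistent with $\gamma\succeq0$.

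The main subtlety, rather than an obstacle, is justifying that the KL theorem applies to the non-unitary, non-embedded operators $F_a$ on $\mathcal H_D$. The KL theorem only requires operators on a finite-dimensional space and a subspace, so this holds, but I would state it explicitly. I would also note that the blocks $B_a$ and $C_a$ never enter, since $F_aP=(A_a+L_a)P$.
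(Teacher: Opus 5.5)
Your proposal is correct and follows essentially the same route as the paper: the split $PF_a^\dagger F_bP=PA_a^\dagger A_bP+PL_a^\dagger L_bP$ from the vanishing cross terms, the Knill--Laflamme theorem applied in both directions to get $\beta=\gamma-\alpha$ or $\gamma=\alpha+\beta$, the adjoint argument for Hermiticity (which tacitly uses $P\neq0$), and the identity $\boldsymbol c^\dagger\beta\boldsymbol c=\bigl\|\sum_a c_aL_a\ket{\psi}\bigr\|^2$ for positivity. Your optional explicit recovery, which diagonalizes the full $\gamma$ rather than $\beta$, goes beyond the paper's proof, which only invokes the Knill--Laflamme theorem for existence.
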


\begin{proof}
We first establish the decomposition on which both directions of the
proof rely. Since
\(\widetilde{\mathcal C}\subseteq\mathcal H_{\rm emb}\), its projector
satisfies \(
P_{\rm emb}P=P\) and \(P_{\rm leak}P=0\).
Therefore \(F_aP
=
F_aP_{\rm emb}P\). Inserting
\(I_D=P_{\rm emb}+P_{\rm leak}\)
on the output side gives \(
F_aP
=
(P_{\rm emb}+P_{\rm leak})F_aP_{\rm emb}P
=
(A_a+L_a)P\). The mixed terms vanish because the two components have orthogonal
ranges. More explicitly,
\[
A_a^\dagger L_b
=
P_{\rm emb}F_a^\dagger
P_{\rm emb}P_{\rm leak}
F_bP_{\rm emb}
=
0,
\]
since \(P_{\rm emb}P_{\rm leak}=0\). Similarly,
\[
L_a^\dagger A_b
=
P_{\rm emb}F_a^\dagger
P_{\rm leak}P_{\rm emb}
F_bP_{\rm emb}
=
0.
\]
Consequently,
\begin{align}
PF_a^\dagger F_bP
&=
P(A_a^\dagger+L_a^\dagger)
(A_b+L_b)P
\nonumber\\
&=
PA_a^\dagger A_bP
+
PL_a^\dagger L_bP.
\label{eq:main_physical_KL_split}
\end{align}

We now prove necessity. Suppose that the action of the complete physical
error family \(\{F_a\}\) on states encoded in
\(\widetilde{\mathcal C}\) is exactly correctable. By the
Knill--Laflamme theorem, there exists a matrix
\(\gamma=[\gamma_{ab}]\) such that
\[
PF_a^\dagger F_bP
=
\gamma_{ab}P
\]
for all \(a,b\). Combining this condition with
Eq.~\eqref{eq:main_physical_KL_split} and the assumed in-manifold
relation \(PA_a^\dagger A_bP=\alpha_{ab}P\), we obtain
\begin{align}
PL_a^\dagger L_bP
&=
PF_a^\dagger F_bP
-
PA_a^\dagger A_bP
\nonumber\\
&=
(\gamma_{ab}-\alpha_{ab})P.
\end{align}
Thus the leakage components necessarily satisfy
\[
PL_a^\dagger L_bP
=
\beta_{ab}P,
\qquad
\beta_{ab}
=
\gamma_{ab}-\alpha_{ab}.
\]
This proves necessity.

We next prove sufficiency. Suppose that
\[
PL_a^\dagger L_bP
=
\beta_{ab}P
\]
for all \(a,b\). Substituting this condition and
\(PA_a^\dagger A_bP=\alpha_{ab}P\) into
Eq.~\eqref{eq:main_physical_KL_split} gives
\begin{align}
PF_a^\dagger F_bP
&=
\alpha_{ab}P+\beta_{ab}P
\nonumber\\
&=
(\alpha_{ab}+\beta_{ab})P.
\end{align}
Hence the complete physical error family satisfies the
Knill--Laflamme conditions with \(
\gamma_{ab}
=
\alpha_{ab}+\beta_{ab}\). The Knill--Laflamme theorem therefore guarantees the existence of a
recovery operation that exactly restores every state initially encoded
in \(\widetilde{\mathcal C}\). This proves sufficiency.

It remains to establish the structure of \(\beta\). Taking the adjoint
of
\(PL_a^\dagger L_bP=\beta_{ab}P\)
gives
\[
PL_b^\dagger L_aP
=
\beta_{ab}^{*}P.
\]
Interchanging \(a\) and \(b\) in the original condition gives
\[
PL_b^\dagger L_aP
=
\beta_{ba}P.
\]
Because \(P\neq0\), the two expressions imply \(
\beta_{ba}
=
\beta_{ab}^{*}\), so \(\beta\) is Hermitian.

To prove positive semidefiniteness, let
\(\ket{\psi}\in\widetilde{\mathcal C}\) be any normalized encoded state,
so that \(P\ket{\psi}=\ket{\psi}\), and let
\(\boldsymbol c=(c_1,c_2,\ldots)^T\) be an arbitrary complex vector.
Then
\begin{align}
\boldsymbol c^\dagger\beta\boldsymbol c
&=
\sum_{a,b}c_a^{*}\beta_{ab}c_b
\nonumber\\
&=
\sum_{a,b}c_a^{*}
\bra{\psi}L_a^\dagger L_b\ket{\psi}
c_b
\nonumber\\
&=
\bra{\psi}
\left(\sum_a c_aL_a\right)^\dagger
\left(\sum_b c_bL_b\right)
\ket{\psi}
\nonumber\\
&=
\left\|
\sum_a c_aL_a\ket{\psi}
\right\|^2
\geq0.
\end{align}
Therefore \(\beta\) is positive semidefinite. Since it is also
Hermitian, the spectral theorem guarantees a unitary matrix \(U\) such
that
\[
U^\dagger\beta U
=
\operatorname{diag}(\lambda_\mu),
\qquad
\lambda_\mu\geq0.
\]
\end{proof}
Theorem~\ref{thm:main_physical_leakage}  identifies the precise obstruction introduced by physical
leakage. In an orthonormal logical basis
\(\{\ket{\widetilde i_L}\}\), the condition becomes \(
\bra{\widetilde i_L}L_a^\dagger L_b\ket{\widetilde j_L}
=
\beta_{ab}\delta_{ij}
\). Thus the leakage overlaps are independent of the encoded logical state:
the leakage process may reveal which error occurred, but it carries no
information about the protected logical amplitudes. Population outside
the computational manifold is therefore not, by itself, an
irreversible loss of quantum information. Leakage becomes
uncorrectable only when the leaked states distinguish the logical
degrees of freedom.

Diagonalizing \(\beta\) produces independent leakage modes whose images
of the code form mutually orthogonal syndrome sectors. Their explicit
construction, the associated syndrome projectors and measurements, and
a completely positive trace-preserving recovery channel are given in
the Supplemental Material, together with examples for both embedded
stabilizer codes and bosonic encodings.

Theorem~\ref{thm:main_physical_leakage} also motivates a leakage-correction break-even point. Recent experiments have demonstrated break-even quantum error correction by extending logical lifetimes beyond those of the best physical components \cite{Sivak2023BreakEven}. The present theorem provides the theoretical conditions under which analogous leakage-correction break-even behavior may be achieved in multilevel quantum hardware.

\textit{Conclusion}---We established a unified framework for quantum
error correction in multilevel systems. Any \( [[n,k,d]] \) stabilizer
code can be embedded isometrically into a single \(D\)-dimensional
system while preserving its code space, operator algebra, syndrome
structure, recovery operations, and Knill--Laflamme conditions.

We further derived a necessary and sufficient condition for correcting
genuine physical leakage beyond the embedded manifold. When the
leakage overlaps are independent of the logical state, the leaked
states form orthogonal syndrome sectors and can be identified and
recovered without revealing the encoded information. The framework
therefore extends stabilizer quantum error correction from idealized
qubit models to intrinsically multilevel quantum hardware and provides
a basis for designing hardware-adapted leakage-correction protocols.
\bibliographystyle{apsrev4-2}
\bibliography{main}
\clearpage

\onecolumngrid

\begin{center}
{\Large\bfseries Supplemental Material}\\[0.4cm]
{\bfseries Embedding Stabilizer Codes and Leakage Correction in Multilevel Quantum Systems}\\[0.25cm]
Ali Abu-Nada and Lian-Ao Wu
\end{center}

\vspace{0.5cm}

\section{S1. Embedding Map and Binary Correspondence}
\label{s1}

The embedding map introduced in the main text establishes an explicit
correspondence between an $n$-qubit Hilbert space and the computational
manifold of a multilevel quantum system. Physically, the construction
identifies the first $2^n$ levels of the multilevel device with the
$2^n$ computational basis states of the qubit register, thereby
allowing quantum information encoded in qubits to be represented
exactly within a multilevel quantum system.

Let $ \mathcal H_{\mathrm q} = (\mathbb C^2)^{\otimes n}$ denote the $n$-qubit Hilbert space. Its computational basis consists of all binary strings of length $n$, $
|x_0x_1\cdots x_{n-1}\rangle$, where,  $x_j\in\{0,1\}$. Since there are $2^n$ such basis states, each binary string may be
associated with a unique integer between $0$ and $2^n-1$. Specifically,
the binary string
$x_0x_1\cdots x_{n-1}$ corresponds to the integer

\begin{equation}
m
=
\sum_{j=0}^{n-1}
2^{\,n-1-j}x_j.
\label{eq:binaryExpansionSM}
\end{equation}

This is simply the standard conversion from binary notation to decimal
notation. For example,

\[
000 \rightarrow 0,
\qquad
001 \rightarrow 1,
\qquad
011 \rightarrow 3,
\qquad
101 \rightarrow 5,
\qquad
111 \rightarrow 7.
\]

For notational convenience, we denote by $
|b(m)\rangle$ the computational basis state whose binary digits represent the integer
$m$. Thus,

\[
|b(0)\rangle = |000\cdots0\rangle,
\]

\[
|b(1)\rangle = |000\cdots1\rangle,
\]

and, for $n=3$,

\[
|b(5)\rangle = |101\rangle,
\qquad
|b(7)\rangle = |111\rangle.
\]

More generally, $|b(m)\rangle$ denotes the unique computational basis
state whose binary representation corresponds to the integer $m$.

The embedding is defined by identifying each computational basis state
with the multilevel basis state carrying the same integer label,

\begin{equation}
|b(m)\rangle
\longmapsto
|m\rangle,
\qquad
m=0,\ldots,2^n-1.
\label{eq:mappingRuleSM}
\end{equation}

In other words, the computational basis state corresponding to the
binary representation of $m$ is mapped to the physical level labelled
by the same integer $m$.

By linearity, this correspondence uniquely defines the embedding map

\begin{equation}
\Phi
=
\sum_{m=0}^{2^n-1}
|m\rangle
\langle b(m)|,
\qquad
\Phi^\dagger
=
\sum_{m=0}^{2^n-1}
|b(m)\rangle
\langle m|.
\label{eq:PhiCompactSM}
\end{equation}

The action of $\Phi$ on the computational basis is therefore

\begin{equation}
\Phi |b(m)\rangle
=
|m\rangle,
\qquad
m=0,\ldots,2^n-1.
\label{eq:PhiActionSM}
\end{equation}

Consequently, every computational basis state of the qubit Hilbert
space is assigned to a unique level of the multilevel system. The map
$\Phi$ therefore establishes a one-to-one and information-preserving
correspondence between the qubit Hilbert space and the computational
manifold of the multilevel device.

As an illustration, consider the case $n=3$. The computational basis
states

\[
|000\rangle,
|001\rangle,
|010\rangle,
|011\rangle,
|100\rangle,
|101\rangle,
|110\rangle,
|111\rangle
\]

correspond respectively to the integers

\[
0,1,2,3,4,5,6,7.
\]

The embedding map becomes

\begin{equation}
\begin{aligned}
\Phi
={}&
|0\rangle\langle000|
+|1\rangle\langle001|
+|2\rangle\langle010|
+|3\rangle\langle011|\\
&
+|4\rangle\langle100|
+|5\rangle\langle101|
+|6\rangle\langle110|
+|7\rangle\langle111|.
\end{aligned}
\label{eq:PhiThreeQubitSM}
\end{equation}

For example,

\begin{equation}
\Phi |101\rangle
=
|5\rangle,
\end{equation}

because the binary string $101$ represents the integer $5$. Similarly,

\[
\Phi |000\rangle = |0\rangle,
\qquad
\Phi |011\rangle = |3\rangle,
\qquad
\Phi |111\rangle = |7\rangle.
\]

The complete correspondence for the three-qubit case is summarized in
Table~\ref{tab:embedding}.

\begin{table}[h]
\centering
\caption{Embedding of the three-qubit computational basis into an
eight-level quantum system.}
\label{tab:embedding}
\begin{tabular}{ccc}
\hline
Integer $m$ & Computational basis state $|b(m)\rangle$ & Embedded state \\
\hline
0 & $|000\rangle$ & $|0\rangle$ \\
1 & $|001\rangle$ & $|1\rangle$ \\
2 & $|010\rangle$ & $|2\rangle$ \\
3 & $|011\rangle$ & $|3\rangle$ \\
4 & $|100\rangle$ & $|4\rangle$ \\
5 & $|101\rangle$ & $|5\rangle$ \\
6 & $|110\rangle$ & $|6\rangle$ \\
7 & $|111\rangle$ & $|7\rangle$ \\
\hline
\end{tabular}
\end{table}

The construction extends directly to arbitrary $n$, providing a
canonical embedding of the $2^n$-dimensional qubit Hilbert space into
the computational manifold of a $D$-dimensional quantum system.

\section{S2. Proof of the Isometric Properties of the Embedding Map}
\label{s2}

For the embedding map construction to be physically meaningful, it must
preserve all quantum information stored in the original qubit register.
In particular, embedding a qubit state into the multilevel system and
subsequently decoding it should reproduce the original state exactly.
Mathematically, this requirement is expressed by the relations

\begin{equation}
\Phi^\dagger\Phi
=
I_{\mathcal H_q},
\qquad
\Phi\Phi^\dagger
=
P_{\mathrm{emb}},
\label{eq:isometrySM}
\end{equation}

which show that $\Phi$ is an isometric embedding from the qubit Hilbert
space into the computational manifold of the multilevel device.

The first relation,
$\Phi^\dagger\Phi=I_{\mathcal H_q}$,
states that embedding followed by decoding leaves every qubit state
unchanged. Consequently, the embedding process introduces no loss of
information and preserves all inner products, amplitudes, and quantum
coherences.

The second relation,
$\Phi\Phi^\dagger=P_{\mathrm{emb}}$,
characterizes the image of the embedding. It shows that the embedded
states occupy precisely the computational manifold
$\mathcal H_{\mathrm{emb}}$, while any population outside this subspace
belongs to the leakage sector. Thus, $P_{\mathrm{emb}}$ acts as the
projector onto the physically relevant computational manifold.

The embedding operator is

\begin{equation}
\Phi
=
\sum_{m=0}^{2^n-1}
|m\rangle
\langle b(m)|,
\label{eq:SM_Phi}
\end{equation}

where $|b(m)\rangle$ denotes the computational basis state whose binary
representation corresponds to the integer $m$. Its adjoint is

\begin{equation}
\Phi^\dagger
=
\sum_{m=0}^{2^n-1}
|b(m)\rangle
\langle m|.
\label{eq:SM_PhiDag}
\end{equation}

Since the computational basis
$\{|b(m)\rangle\}$ of $\mathcal H_q$
and the multilevel basis
$\{|m\rangle\}$ of $\mathcal H_{\mathrm{emb}}$
are both orthonormal, we have

\begin{equation}
\langle b(m)|b(k)\rangle
=
\delta_{mk},
\qquad
\langle m|k\rangle
=
\delta_{mk}.
\end{equation}

Using these relations,

\begin{align}
\Phi^\dagger\Phi
&=
\left(
\sum_m |b(m)\rangle\langle m|
\right)
\left(
\sum_k |k\rangle\langle b(k)|
\right)
\nonumber\\
&=
\sum_{m,k}
|b(m)\rangle
\langle m|k\rangle
\langle b(k)|
\nonumber\\
&=
\sum_{m,k}
\delta_{mk}
|b(m)\rangle
\langle b(k)|
\nonumber\\
&=
\sum_m
|b(m)\rangle
\langle b(m)|
\nonumber\\
&=
I_{\mathcal H_q}.
\label{eq:SM_isometry1}
\end{align}

This result confirms that every qubit state is recovered exactly after
embedding and decoding. Explicitly, for any
$|\psi\rangle\in\mathcal H_q$, $
\Phi^\dagger\Phi|\psi\rangle
=
|\psi\rangle$. Thus, the embedding preserves the complete quantum information carried
by the qubit register. We next evaluate

\begin{align}
\Phi\Phi^\dagger
&=
\left(
\sum_m |m\rangle\langle b(m)|
\right)
\left(
\sum_k |b(k)\rangle\langle k|
\right)
\nonumber\\
&=
\sum_{m,k}
|m\rangle
\langle b(m)|b(k)\rangle
\langle k|
\nonumber\\
&=
\sum_{m,k}
\delta_{mk}
|m\rangle
\langle k|
\nonumber\\
&=
\sum_m
|m\rangle
\langle m|.
\label{eq:SM_isometry2}
\end{align}

Defining $
P_{\mathrm{emb}}
=
\sum_{m=0}^{2^n-1}
|m\rangle\langle m|$, we obtain $
\Phi\Phi^\dagger
=
P_{\mathrm{emb}}$
.

The operator $P_{\mathrm{emb}}$ is the orthogonal projector onto the
embedded computational manifold. Physically, it removes any component
outside the computational manifold and retains only the embedded qubit
degrees of freedom. Therefore, the image of $\Phi$ is exactly
$\mathcal H_{\mathrm{emb}}$.

As a simple illustration, consider a multilevel state $
|\Psi\rangle
=
\alpha |1\rangle
+
\beta |8\rangle$, where $|1\rangle$ belongs to the computational manifold and
$|8\rangle$ belongs to the leakage sector. Applying the projector gives $
P_{\mathrm{emb}}|\Psi\rangle
=
\alpha |1\rangle$, showing that the leakage component is removed while the computational
component is preserved.

\section{S3. Preservation of the Logical and Stabilizer-Code Structure}
\label{S3}

The KL conditions established in the main text guarantee
that the embedded code retains the error-correcting capability of the
original stabilizer code. A complete equivalence, however, requires
more than preservation of the KL conditions alone. One
must also show that the embedding preserves the geometry of the logical
code space, the operator algebra, stabilizer generators, logical
operators, syndrome assignments, syndrome measurements, syndrome
projectors, and recovery operations.

In this section, we establish each of these properties directly from
the isometric relations $i.e.$$\Phi^\dagger\Phi
=
I_{\mathcal H_{\rm q}}$ and $\Phi\Phi^\dagger
=
P_{\rm emb}$. Throughout this section, the embedded image of an operator
\(A\in\mathcal L(\mathcal H_{\rm q})\) is denoted by $
\widetilde A
=
\Phi A\Phi^\dagger$. The operator \(\widetilde A\) acts nontrivially only within
\(\mathcal H_{\rm emb}\). In particular, $
P_{\rm emb}\widetilde A
=
\widetilde A P_{\rm emb}
=
\widetilde A$. Thus, all algebraic equivalences stated below are exact on the embedded
computational manifold.

\subsection{S3.1 Equivalence of the Original and Embedded Code Spaces}
\label{S3.1}

Let \(\mathcal C\subseteq\mathcal H_{\rm q}\) be the code space of an
arbitrary \([[n,k,d]]\) stabilizer code. Its realization in the
multilevel system is defined as the image of \(\mathcal C\) under the
embedding map,

\begin{equation}
\widetilde{\mathcal C}
=
\Phi\mathcal C
=
\left\{
\Phi|\psi_L\rangle:
|\psi_L\rangle\in\mathcal C
\right\}
\subseteq
\mathcal H_{\rm emb}.
\label{eq:S3_embedded_code}
\end{equation}

Physically, Eq.~(\ref{eq:S3_embedded_code}) assigns a multilevel state
\(|\widetilde\psi_L\rangle=\Phi|\psi_L\rangle\) to every logical state
\(|\psi_L\rangle\in\mathcal C\). We now prove that this assignment
preserves the encoded information and that each embedded codeword
originates from exactly one logical state.

First, \(\widetilde{\mathcal C}\) is a linear subspace. Indeed, for
\(|\widetilde\psi_L\rangle=\Phi|\psi_L\rangle\) and
\(|\widetilde\varphi_L\rangle=\Phi|\varphi_L\rangle\), with
\(|\psi_L\rangle,|\varphi_L\rangle\in\mathcal C\), linearity of
\(\Phi\) gives

\begin{equation}
a|\widetilde\psi_L\rangle+b|\widetilde\varphi_L\rangle
=
\Phi\!\left(
a|\psi_L\rangle+b|\varphi_L\rangle
\right)
\in\widetilde{\mathcal C}
\label{eq:S3_subspace}
\end{equation}

for all \(a,b\in\mathbb C\), because
\(a|\psi_L\rangle+b|\varphi_L\rangle\in\mathcal C\).

The preservation of logical information follows from the isometry
relation

\begin{equation}
\Phi^\dagger\Phi
=
I_{\mathcal H_{\rm q}}.
\label{eq:S3_isometry}
\end{equation}

For arbitrary logical states
\(|\psi_L\rangle,|\varphi_L\rangle\in\mathcal C\), their embedded
states satisfy

\begin{align}
\langle\widetilde\varphi_L|\widetilde\psi_L\rangle
&=
(\Phi|\varphi_L\rangle)^\dagger
(\Phi|\psi_L\rangle)
\nonumber\\
&=
\langle\varphi_L|
\Phi^\dagger\Phi
|\psi_L\rangle
\nonumber\\
&=
\langle\varphi_L|\psi_L\rangle.
\label{eq:S3_innerproduct}
\end{align}

Thus, every inner product between logical states is unchanged.
Consequently, normalization, orthogonality, transition amplitudes,
relative phases, and quantum coherences are identical in
\(\mathcal C\) and \(\widetilde{\mathcal C}\). The embedding therefore
changes only the physical representation of the code, not the quantum
information encoded in it.

We next prove that every embedded codeword has a unique logical
preimage. Existence follows directly from the definition of
\(\widetilde{\mathcal C}\): for every
\(|\widetilde\psi_L\rangle\in\widetilde{\mathcal C}\), there exists at
least one \(|\psi_L\rangle\in\mathcal C\) such that
\(|\widetilde\psi_L\rangle=\Phi|\psi_L\rangle\). To prove uniqueness,
suppose that the same embedded state arises from two logical states,

\begin{equation}
\Phi|\psi_L\rangle
=
\Phi|\varphi_L\rangle.
\label{eq:S3_equal_images}
\end{equation}

Applying \(\Phi^\dagger\) to both sides and using
Eq.~(\ref{eq:S3_isometry}) yields

\begin{equation}
|\psi_L\rangle
=
\Phi^\dagger\Phi|\psi_L\rangle
=
\Phi^\dagger\Phi|\varphi_L\rangle
=
|\varphi_L\rangle.
\label{eq:S3_unique_preimage}
\end{equation}

Therefore, one embedded codeword cannot represent two distinct logical
states. Every state in \(\mathcal C\) has exactly one image in
\(\widetilde{\mathcal C}\), and every state in
\(\widetilde{\mathcal C}\) has exactly one preimage in \(\mathcal C\).
The embedding consequently establishes a one-to-one correspondence
between the two code spaces.

This correspondence also proves that their dimensions are equal. Let
\(\{|i_L\rangle\}_{i=0}^{2^k-1}\) be an orthonormal basis of
\(\mathcal C\), and define

\begin{equation}
|\widetilde i_L\rangle
=
\Phi|i_L\rangle.
\label{eq:S3_embedded_basis_definition}
\end{equation}

Equation~(\ref{eq:S3_innerproduct}) gives

\begin{equation}
\langle\widetilde i_L|\widetilde j_L\rangle
=
\langle i_L|
\Phi^\dagger\Phi
|j_L\rangle
=
\delta_{ij},
\label{eq:S3_embedded_basis_orthogonality}
\end{equation}

so the \(2^k\) embedded basis states are linearly independent. They also
span \(\widetilde{\mathcal C}\): every
\(|\widetilde\psi_L\rangle\in\widetilde{\mathcal C}\) originates from
some \(|\psi_L\rangle=\sum_i c_i|i_L\rangle\in\mathcal C\), and hence

\begin{equation}
|\widetilde\psi_L\rangle
=
\Phi|\psi_L\rangle
=
\sum_{i=0}^{2^k-1}
c_i|\widetilde i_L\rangle.
\label{eq:S3_embedded_basis_span}
\end{equation}

Therefore,
\(\{|\widetilde i_L\rangle\}_{i=0}^{2^k-1}\) is an orthonormal basis of
\(\widetilde{\mathcal C}\), and

\begin{equation}
\dim\widetilde{\mathcal C}
=
\dim\mathcal C
=
2^k.
\label{eq:S3_dimension}
\end{equation}

The embedded system thus stores exactly the same \(k\) logical qubits
as the original code.

Finally, if

\begin{equation}
P_{\mathcal C}
=
\sum_{i=0}^{2^k-1}
|i_L\rangle\langle i_L|
\end{equation}

projects onto \(\mathcal C\), then

\begin{align}
\widetilde P_{\mathcal C}
&=
\Phi P_{\mathcal C}\Phi^\dagger
\nonumber\\
&=
\sum_{i=0}^{2^k-1}
|\widetilde i_L\rangle
\langle\widetilde i_L|
\label{eq:S3_embedded_projector}
\end{align}

is the orthogonal projector onto \(\widetilde{\mathcal C}\).

We conclude that the original and embedded code spaces contain exactly
the same logical information. Every logical state has a unique
multilevel representation, all logical overlaps and coherences are
preserved, and both spaces encode the same number of logical qubits.
This exact equivalence provides the basis for transferring the
stabilizer algebra, logical operators, syndrome measurements, and
recovery operations to the multilevel system.
\subsection{S3.2 Preservation of Operator Relations}
\label{S3.2}

For any operator \(A\) acting on \(\mathcal H_{\rm q}\), define its
embedded representation by

\begin{equation}
\widetilde A
=
\Phi A\Phi^\dagger .
\label{eq:S3_operator_transfer}
\end{equation}

The essential property of this transfer is that operator products are
preserved. Indeed, using
\(\Phi^\dagger\Phi=I_{\mathcal H_{\rm q}}\),

\begin{equation}
\widetilde A\,\widetilde B
=
\Phi A
(\Phi^\dagger\Phi)
B\Phi^\dagger
=
\Phi AB\Phi^\dagger .
\label{eq:S3_product_preservation}
\end{equation}

Consequently, the commutator and anticommutator satisfy

\begin{align}
[\widetilde A,\widetilde B]
&=
\Phi[A,B]\Phi^\dagger,
\nonumber\\
\{\widetilde A,\widetilde B\}
&=
\Phi\{A,B\}\Phi^\dagger.
\label{eq:S3_commutation_preservation}
\end{align}

Therefore, if \(A\) and \(B\) commute in the qubit representation,
\([A,B]=0\), then their embedded counterparts also commute,
\([\widetilde A,\widetilde B]=0\). Likewise, if they anticommute,
\(\{A,B\}=0\), then
\(\{\widetilde A,\widetilde B\}=0\).

Thus, all commutation and anticommutation relations required by the
stabilizer formalism are reproduced exactly within
\(\mathcal H_{\rm emb}\). In particular, commuting stabilizer
generators remain commuting, while the anticommutation relations between
logical Pauli operators and detectable errors are unchanged.

\subsection{S3.3 Physical Consequences of Theorem 1}
\label{S3.3}

Theorem 1 of the main text establishes that the embedded code satisfies
the same Knill--Laflamme conditions as the original stabilizer code.
Consequently, the complete error-correction structure is preserved under
the embedding.

In particular, for any embedded logical basis
\(\{|\widetilde i_L\rangle\}\),

\begin{equation}
\langle\widetilde i_L|
\widetilde E_a^\dagger
\widetilde E_b
|\widetilde j_L\rangle
=
\alpha_{ab}\delta_{ij},
\label{eq:S3_KL_basis}
\end{equation}

showing that the distinguishability relations among correctable error
sectors are identical before and after embedding.

Therefore:

\begin{enumerate}
\item Correctable error sets remain correctable.
\item Degenerate codes remain degenerate.
\item Nondegenerate codes remain nondegenerate.
\item The syndrome structure is unchanged.
\item Recovery operations may be transferred directly to the embedded
system.
\end{enumerate}

Hence, the embedding preserves not merely the code space but the entire
quantum error-correction protocol.

\subsection{S3.4 Preservation of Stabilizer Generators}
\label{S3.4}

Let the stabilizer group of the original code be generated by the
independent commuting operators

\begin{equation}
\mathcal S
=
\langle
S_1,\ldots,S_{n-k}
\rangle.
\end{equation}

The code space is their simultaneous \(+1\)-eigenspace:

\begin{equation}
S_r|\psi_L\rangle
=
|\psi_L\rangle,
\qquad
r=1,\ldots,n-k.
\label{eq:S3_original_stabilization}
\end{equation}

Define the embedded stabilizer generators by

\begin{equation}
\widetilde S_r
=
\Phi S_r\Phi^\dagger.
\label{eq:S3_embedded_stabilizer}
\end{equation}

\begin{proposition}[Preservation of Stabilizer Generators]
\label{prop:S3_stabilizers}

The transferred generators satisfy

\begin{align}
\widetilde S_r^\dagger
&=
\widetilde S_r,
\label{eq:S3_stabilizer_hermitian}\\
[\widetilde S_r,\widetilde S_t]
&=
0,
\label{eq:S3_stabilizer_commuting}\\
\widetilde S_r^{\,2}
&=
P_{\rm emb},
\label{eq:S3_stabilizer_square}
\end{align}

and stabilize every embedded logical state:

\begin{equation}
\widetilde S_r
|\widetilde\psi_L\rangle
=
|\widetilde\psi_L\rangle.
\label{eq:S3_embedded_stabilization}
\end{equation}

Furthermore, the embedded code space is the simultaneous
\(+1\)-eigenspace of the transferred generators within
\(\mathcal H_{\rm emb}\).

\end{proposition}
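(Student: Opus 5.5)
The plan is to derive every claim of Proposition~\ref{prop:S3_stabilizers} from the two isometry relations~(\ref{eq:isometrySM}) and the product rule~(\ref{eq:S3_product_preservation}), together with the standard facts about qubit stabilizer generators: $S_r^\dagger=S_r$, $[S_r,S_t]=0$, $S_r^2=I_{\mathcal H_{\rm q}}$, and the characterization of $\mathcal C$ as their joint $+1$ eigenspace.

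\emph{Algebraic relations.} For hermiticity we compute $\widetilde S_r^\dagger=(\Phi S_r\Phi^\dagger)^\dagger=\Phi S_r^\dagger\Phi^\dagger=\widetilde S_r$. Commutativity follows from Eq.~(\ref{eq:S3_commutation_preservation}), since $[\widetilde S_r,\widetilde S_t]=\Phi[S_r,S_t]\Phi^\dagger=0$. For the square, the product rule gives $\widetilde S_r^{\,2}=\Phi S_r^2\Phi^\dagger=\Phi\Phi^\dagger=P_{\rm emb}$. Note that the square is $P_{\rm emb}$ rather than $I_D$, because $\widetilde S_r$ vanishes on $\mathcal H_{\rm leak}$.

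\emph{Stabilization.} Take $|\widetilde\psi_L\rangle=\Phi|\psi_L\rangle$ with $|\psi_L\rangle\in\mathcal C$. Then
\[
\widetilde S_r|\widetilde\psi_L\rangle=\Phi S_r\Phi^\dagger\Phi|\psi_L\rangle=\Phi S_r|\psi_L\rangle=\Phi|\psi_L\rangle=|\widetilde\psi_L\rangle .
\]

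\emph{Eigenspace characterization.} This is the only step with content. One inclusion is the stabilization result just proved. For the converse, let $|\chi\rangle\in\mathcal H_{\rm emb}$ satisfy $\widetilde S_r|\chi\rangle=|\chi\rangle$ for all $r$. Because $\Phi\Phi^\dagger=P_{\rm emb}$ is the identity on $\mathcal H_{\rm emb}$, we can write $|\chi\rangle=\Phi|\phi\rangle$ with $|\phi\rangle=\Phi^\dagger|\chi\rangle$. Applying $\Phi^\dagger$ to the eigenvalue equation and using $\Phi^\dagger\Phi=I_{\mathcal H_{\rm q}}$ gives $S_r|\phi\rangle=|\phi\rangle$ for every $r$. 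Hence $|\phi\rangle\in\mathcal C$, and so $|\chi\rangle\in\Phi\mathcal C=\widetilde{\mathcal C}$.

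The main obstacle is the restriction to $\mathcal H_{\rm emb}$. On all of $\mathcal H_D$ the leakage vectors are $0$-eigenvectors of every $\widetilde S_r$, so they never enter the $+1$ eigenspace, and the characterization therefore holds within $\mathcal H_{\rm emb}$ as stated. To make this explicit we can also write
\[
\widetilde P_{\mathcal C}=\Phi\Big(\prod_r\tfrac{I+S_r}{2}\Big)\Phi^\dagger=\prod_r\tfrac{P_{\rm emb}+\widetilde S_r}{2},
\]
where expanding the product and applying the product rule term by term confirms the second equality.
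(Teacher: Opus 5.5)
Your proposal is correct and follows the paper's proof essentially step by step. Hermiticity, commutativity and $\widetilde S_r^{\,2}=P_{\rm emb}$ come from conjugation and the product rule, stabilization comes from $\Phi^\dagger\Phi=I_{\mathcal H_{\rm q}}$, and the converse pulls a $+1$ eigenvector in $\mathcal H_{\rm emb}$ back to $\mathcal C$ (you write the preimage explicitly as $\Phi^\dagger|\chi\rangle$, where the paper only asserts that it exists). Your side remark about $\mathcal H_{\rm leak}$ is stated a little loosely, since a vector with components in both sectors is not a $0$-eigenvector, but it is right: any $+1$ eigenvector satisfies $|\chi\rangle=\widetilde S_r|\chi\rangle\in\operatorname{ran}\widetilde S_r\subseteq\mathcal H_{\rm emb}$, so the characterization in fact holds on all of $\mathcal H_D$.
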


\begin{proof}

Hermiticity follows from

\begin{equation}
\widetilde S_r^\dagger
=
\Phi S_r^\dagger\Phi^\dagger
=
\Phi S_r\Phi^\dagger
=
\widetilde S_r.
\end{equation}

Since the original stabilizer generators commute,

\begin{align}
[\widetilde S_r,\widetilde S_t]
&=
\Phi[S_r,S_t]\Phi^\dagger
\nonumber\\
&=
0.
\end{align}

Moreover, because \(S_r^2=I_{\mathcal H_{\rm q}}\),

\begin{align}
\widetilde S_r^{\,2}
&=
\Phi S_r^2\Phi^\dagger
\nonumber\\
&=
\Phi I_{\mathcal H_{\rm q}}\Phi^\dagger
\nonumber\\
&=
P_{\rm emb}.
\end{align}

For any
\(|\widetilde\psi_L\rangle=\Phi|\psi_L\rangle\),

\begin{align}
\widetilde S_r
|\widetilde\psi_L\rangle
&=
\Phi S_r\Phi^\dagger
\Phi|\psi_L\rangle
\nonumber\\
&=
\Phi S_r|\psi_L\rangle
\nonumber\\
&=
\Phi|\psi_L\rangle
\nonumber\\
&=
|\widetilde\psi_L\rangle.
\end{align}

Conversely, let
\(|\widetilde\chi\rangle\in\mathcal H_{\rm emb}\)
satisfy

\begin{equation}
\widetilde S_r|\widetilde\chi\rangle
=
|\widetilde\chi\rangle
\end{equation}

for all \(r\). Since
\(|\widetilde\chi\rangle\in\mathcal H_{\rm emb}\), there exists a
unique
\(|\chi\rangle\in\mathcal H_{\rm q}\)
such that
\(|\widetilde\chi\rangle=\Phi|\chi\rangle\).
Applying \(\Phi^\dagger\) gives

\begin{equation}
S_r|\chi\rangle
=
|\chi\rangle
\end{equation}

for all \(r\), and therefore
\(|\chi\rangle\in\mathcal C\).
Hence
\(|\widetilde\chi\rangle\in\widetilde{\mathcal C}\).

\end{proof}

The relation
\(\widetilde S_r^2=P_{\rm emb}\)
shows that each transferred stabilizer has eigenvalues
\(\pm1\) on \(\mathcal H_{\rm emb}\) and acts as zero on
\(\mathcal H_{\rm leak}\). Therefore, stabilizer measurements within
the computational manifold reproduce exactly the measurements of the
original qubit code.

\subsection{S3.5 Preservation of Logical Operators}
\label{S3.5}

For an \([[n,k,d]]\) stabilizer code, let

\begin{equation}
\overline X_\ell,
\qquad
\overline Z_\ell,
\qquad
\ell=1,\ldots,k,
\end{equation}

denote a set of logical Pauli operators. Their embedded images are

\begin{equation}
\widetilde{\overline X}_{\ell}
=
\Phi\overline X_\ell\Phi^\dagger,
\qquad
\widetilde{\overline Z}_{\ell}
=
\Phi\overline Z_\ell\Phi^\dagger.
\label{eq:S3_embedded_logicals}
\end{equation}

\begin{proposition}[Preservation of Logical Operators]
\label{prop:S3_logicals}

The transferred logical operators preserve the embedded code space and
satisfy the same logical Pauli algebra as the original logical
operators. In particular,

\begin{align}
\widetilde{\overline X}_{\ell}^{\,2}
&=
P_{\rm emb},
&
\widetilde{\overline Z}_{\ell}^{\,2}
&=
P_{\rm emb},
\label{eq:S3_logical_square}\\
\widetilde{\overline X}_{\ell}
\widetilde{\overline Z}_{m}
&=
(-1)^{\delta_{\ell m}}
\widetilde{\overline Z}_{m}
\widetilde{\overline X}_{\ell},
\label{eq:S3_logical_Pauli}\\
[\widetilde{\overline X}_{\ell},\widetilde S_r]
&=
0,
&
[\widetilde{\overline Z}_{\ell},\widetilde S_r]
&=
0.
\label{eq:S3_logical_stabilizer_commutation}
\end{align}

Their action on embedded logical states is the embedded image of the
original logical action.

\end{proposition}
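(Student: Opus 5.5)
The proof will reduce every claim to the corresponding identity in the qubit picture, using the product-preservation rule of Sec.~S3.2, $\widetilde A\,\widetilde B=\Phi AB\Phi^\dagger$, together with $\Phi^\dagger\Phi=I_{\mathcal H_{\rm q}}$ and $\Phi\Phi^\dagger=P_{\rm emb}$. The original logical Pauli operators are elements of the $n$-qubit Pauli group. They therefore satisfy $\overline X_\ell^{\,2}=\overline Z_\ell^{\,2}=I_{\mathcal H_{\rm q}}$, $\overline X_\ell\overline Z_m=(-1)^{\delta_{\ell m}}\overline Z_m\overline X_\ell$, and $[\overline X_\ell,S_r]=[\overline Z_\ell,S_r]=0$, since they lie in the normalizer of $\mathcal S$. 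I will take these relations as the defining properties of a set of logical operators.

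\textbf{Step 1 (squares).} Exactly as in the proof of Proposition~\ref{prop:S3_stabilizers}, I write $\widetilde{\overline X}_\ell^{\,2}=\Phi\overline X_\ell^{\,2}\Phi^\dagger=\Phi\Phi^\dagger=P_{\rm emb}$, and likewise for $\overline Z_\ell$. This gives Eq.~(\ref{eq:S3_logical_square}).

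\textbf{Step 2 (Pauli algebra and commutation with stabilizers).} I apply product preservation to both orderings. This gives $\widetilde{\overline X}_\ell\widetilde{\overline Z}_m=\Phi\overline X_\ell\overline Z_m\Phi^\dagger=(-1)^{\delta_{\ell m}}\Phi\overline Z_m\overline X_\ell\Phi^\dagger=(-1)^{\delta_{\ell m}}\widetilde{\overline Z}_m\widetilde{\overline X}_\ell$, which is Eq.~(\ref{eq:S3_logical_Pauli}). In the same way, Eq.~(\ref{eq:S3_commutation_preservation}) gives $[\widetilde{\overline X}_\ell,\widetilde S_r]=\Phi[\overline X_\ell,S_r]\Phi^\dagger=0$, and similarly for $\widetilde{\overline Z}_\ell$. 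This gives Eq.~(\ref{eq:S3_logical_stabilizer_commutation}).

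\textbf{Step 3 (action on the code and invariance of $\widetilde{\mathcal C}$).} For $|\widetilde\psi_L\rangle=\Phi|\psi_L\rangle$ I compute $\widetilde{\overline X}_\ell|\widetilde\psi_L\rangle=\Phi\overline X_\ell\Phi^\dagger\Phi|\psi_L\rangle=\Phi(\overline X_\ell|\psi_L\rangle)$. This is precisely the statement that the embedded action is the image of the original logical action. Since $\overline X_\ell$ preserves $\mathcal C$, the vector $\overline X_\ell|\psi_L\rangle$ lies in $\mathcal C$, so the result lies in $\Phi\mathcal C=\widetilde{\mathcal C}$. The same argument applies to $\overline Z_\ell$. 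As a cross-check, invariance also follows from Step~2 combined with the characterization in Proposition~\ref{prop:S3_stabilizers} of $\widetilde{\mathcal C}$ as the joint $+1$ eigenspace of the $\widetilde S_r$ within $\mathcal H_{\rm emb}$. The image lies in $\mathcal H_{\rm emb}$, and $\widetilde S_r\widetilde{\overline X}_\ell|\widetilde\psi_L\rangle=\widetilde{\overline X}_\ell\widetilde S_r|\widetilde\psi_L\rangle=\widetilde{\overline X}_\ell|\widetilde\psi_L\rangle$. In the embedded logical basis this gives $\langle\widetilde i_L|\widetilde{\overline X}_\ell|\widetilde j_L\rangle=\langle i_L|\overline X_\ell|j_L\rangle$, so the matrix elements coincide.

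\textbf{Main obstacle.} Nothing here is computationally hard. The only point needing care is that the embedded relations hold as operator identities on all of $\mathcal H_D$ only because every transferred operator annihilates $\mathcal H_{\rm leak}$. This is why the squares equal $P_{\rm emb}$ rather than $I_D$, and why the Pauli algebra is exact only on $\mathcal H_{\rm emb}$. I will state this explicitly using $P_{\rm emb}\widetilde A=\widetilde A P_{\rm emb}=\widetilde A$ from Sec.~S3, so that ``same logical Pauli algebra'' is understood as the algebra represented on $\mathcal H_{\rm emb}$ with unit $P_{\rm emb}$.
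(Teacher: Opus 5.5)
Your proposal is correct and follows essentially the same route as the paper's proof. The paper derives the Pauli relations and the stabilizer commutation from product preservation $\widetilde A\,\widetilde B=\Phi AB\Phi^\dagger$, and the logical action from $\Phi\overline X_\ell\Phi^\dagger\Phi|\psi_L\rangle=\Phi\overline X_\ell|\psi_L\rangle$; you additionally spell out the squares $\widetilde{\overline X}_\ell^{\,2}=\Phi\Phi^\dagger=P_{\rm emb}$ and the invariance of $\widetilde{\mathcal C}$, which the paper leaves implicit.
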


\begin{proof}

The logical Pauli relations follow directly from
Eq.~(\ref{eq:S3_product_preservation}). For example,

\begin{align}
\widetilde{\overline X}_{\ell}
\widetilde{\overline Z}_{m}
&=
\Phi
\overline X_\ell
\overline Z_m
\Phi^\dagger
\nonumber\\
&=
(-1)^{\delta_{\ell m}}
\Phi
\overline Z_m
\overline X_\ell
\Phi^\dagger
\nonumber\\
&=
(-1)^{\delta_{\ell m}}
\widetilde{\overline Z}_{m}
\widetilde{\overline X}_{\ell}.
\end{align}

Similarly,

\begin{align}
[\widetilde{\overline X}_{\ell},\widetilde S_r]
&=
\Phi
[\overline X_\ell,S_r]
\Phi^\dagger
\nonumber\\
&=
0,
\end{align}

and likewise for
\(\widetilde{\overline Z}_{\ell}\).

For an embedded logical state
\(|\widetilde\psi_L\rangle=\Phi|\psi_L\rangle\),

\begin{align}
\widetilde{\overline X}_{\ell}
|\widetilde\psi_L\rangle
&=
\Phi\overline X_\ell\Phi^\dagger
\Phi|\psi_L\rangle
\nonumber\\
&=
\Phi
\overline X_\ell|\psi_L\rangle.
\label{eq:S3_logical_action}
\end{align}

Thus, the transferred logical operator produces precisely the embedded
image of the corresponding original logical transformation.

\end{proof}

\subsection{S3.6 Preservation of Syndrome Relations and Measurements}
\label{S3.6}

The syndrome of an error is determined by its commutation or
anticommutation relations with the stabilizer generators. Suppose that

\begin{equation}
S_rE_a
=
(-1)^{s_r(a)}
E_aS_r,
\qquad
s_r(a)\in\{0,1\}.
\label{eq:S3_original_syndrome_relation}
\end{equation}

The syndrome vector is

\begin{equation}
\mathbf s(E_a)
=
\big(
s_1(a),\ldots,s_{n-k}(a)
\big).
\end{equation}

\begin{proposition}[Preservation of Syndrome Assignments]
\label{prop:S3_syndromes}

For the embedded error

\begin{equation}
\widetilde E_a
=
\Phi E_a\Phi^\dagger,
\end{equation}

one has

\begin{equation}
\widetilde S_r\widetilde E_a
=
(-1)^{s_r(a)}
\widetilde E_a\widetilde S_r.
\label{eq:S3_embedded_syndrome_relation}
\end{equation}

Consequently,

\begin{equation}
\mathbf s(\widetilde E_a)
=
\mathbf s(E_a).
\label{eq:S3_same_syndrome}
\end{equation}

\end{proposition}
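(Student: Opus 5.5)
The plan is to obtain Eq.~(\ref{eq:S3_embedded_syndrome_relation}) by pushing the original commutation relation (\ref{eq:S3_original_syndrome_relation}) through the embedding. We use only the product-preservation identity (\ref{eq:S3_product_preservation}), which rests on $\Phi^\dagger\Phi=I_{\mathcal H_{\rm q}}$.

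First, we apply Eq.~(\ref{eq:S3_product_preservation}) with $A=S_r$ and $B=E_a$. This gives
\begin{equation*}
\widetilde S_r\widetilde E_a=\Phi S_r\Phi^\dagger\Phi E_a\Phi^\dagger=\Phi S_rE_a\Phi^\dagger .
\end{equation*}
Next, we substitute $S_rE_a=(-1)^{s_r(a)}E_aS_r$. The scalar $(-1)^{s_r(a)}$ passes through $\Phi$ by linearity. Reading Eq.~(\ref{eq:S3_product_preservation}) backwards then gives
\begin{equation*}
\Phi E_aS_r\Phi^\dagger=\Phi E_a\Phi^\dagger\Phi S_r\Phi^\dagger=\widetilde E_a\widetilde S_r .
\end{equation*}
Together these establish Eq.~(\ref{eq:S3_embedded_syndrome_relation}) for every generator index $r$.

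It remains to justify Eq.~(\ref{eq:S3_same_syndrome}). Here the only subtlety is that the syndrome bit $s_r(\widetilde E_a)$ must be well defined, i.e.\ the embedded relation must not hold for both signs simultaneously. If $\widetilde S_r\widetilde E_a$ equalled both $+\widetilde E_a\widetilde S_r$ and $-\widetilde E_a\widetilde S_r$, then $\widetilde E_a\widetilde S_r=0$. Conjugating by $\Phi^\dagger(\cdot)\Phi$ and using $\Phi^\dagger\Phi=I_{\mathcal H_{\rm q}}$ would give $E_aS_r=0$. This is impossible, since $S_r$ is unitary and $E_a$ is a nonzero Pauli error.

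The conjugation $\Phi^\dagger(\cdot)\Phi$ in fact inverts the transfer exactly: $\Phi^\dagger\widetilde A\Phi=A$. This makes the sign assignment unique, so $s_r(\widetilde E_a)=s_r(a)$ for all $r$, and the syndrome vectors coincide.

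The calculation itself is routine; the main point of care is this uniqueness step. I would also remark that on $\mathcal H_{\rm leak}$ both sides of Eq.~(\ref{eq:S3_embedded_syndrome_relation}) vanish identically, since $\widetilde S_r$ and $\widetilde E_a$ are supported on $\mathcal H_{\rm emb}$. The relation is therefore meaningful only within the embedded manifold. There, because $\widetilde S_r$ has eigenvalues $\pm1$ (Proposition~\ref{prop:S3_stabilizers}), $\widetilde E_a$ maps the $+1$ eigenspace of $\widetilde S_r$ into the $(-1)^{s_r(a)}$ eigenspace, which is the operational content of the syndrome.
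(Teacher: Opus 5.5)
Your proof is correct and follows the paper's: you use product preservation via $\Phi^\dagger\Phi=I_{\mathcal H_{\rm q}}$, substitute the original sign relation, and reassemble the result into $\widetilde E_a\widetilde S_r$. Your added check that the sign is uniquely determined, because $\widetilde E_a\widetilde S_r\neq 0$, makes explicit something the paper leaves implicit when it concludes that every syndrome bit is unchanged.
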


\begin{proof}

Using operator-product preservation,

\begin{align}
\widetilde S_r\widetilde E_a
&=
\Phi S_rE_a\Phi^\dagger
\nonumber\\
&=
(-1)^{s_r(a)}
\Phi E_aS_r\Phi^\dagger
\nonumber\\
&=
(-1)^{s_r(a)}
\widetilde E_a\widetilde S_r.
\end{align}

Therefore, every syndrome bit is unchanged.

\end{proof}

For a single stabilizer generator \(S_r\), the original two-outcome
measurement projectors are

\begin{equation}
\Pi_r^{(\pm)}
=
\frac{I_{\mathcal H_{\rm q}}\pm S_r}{2}.
\label{eq:S3_original_single_projectors}
\end{equation}

Their embedded images are

\begin{align}
\widetilde\Pi_r^{(\pm)}
&=
\Phi\Pi_r^{(\pm)}\Phi^\dagger
\nonumber\\
&=
\frac{
P_{\rm emb}\pm\widetilde S_r
}{2}.
\label{eq:S3_embedded_single_projectors}
\end{align}

These operators satisfy

\begin{align}
\widetilde\Pi_r^{(+)}
+
\widetilde\Pi_r^{(-)}
&=
P_{\rm emb},
\label{eq:S3_single_complete}\\
\widetilde\Pi_r^{(+)}
\widetilde\Pi_r^{(-)}
&=
0.
\end{align}

Thus, they form a complete two-outcome projective measurement on the
embedded computational manifold.

For a normalized embedded state

\begin{equation}
\widetilde\rho
=
\Phi\rho\Phi^\dagger,
\end{equation}

the outcome probabilities are

\begin{align}
\widetilde p_r(\pm)
&=
\operatorname{Tr}
\left[
\widetilde\Pi_r^{(\pm)}
\widetilde\rho
\right]
\nonumber\\
&=
\operatorname{Tr}
\left[
\Phi\Pi_r^{(\pm)}
\rho
\Phi^\dagger
\right]
\nonumber\\
&=
\operatorname{Tr}
\left[
\Pi_r^{(\pm)}\rho
\right]
\nonumber\\
&=
p_r(\pm).
\label{eq:S3_measurement_probability}
\end{align}

Therefore, stabilizer-measurement statistics are preserved exactly.

If the physical system contains a nontrivial leakage sector, then a
measurement complete on the full Hilbert space
\(\mathcal H_D\)
may be written as the three-outcome measurement

\begin{equation}
\left\{
\widetilde\Pi_r^{(+)},
\widetilde\Pi_r^{(-)},
P_{\rm leak}
\right\},
\qquad
P_{\rm leak}
=
I_{\mathcal H_D}-P_{\rm emb}.
\label{eq:S3_full_measurement}
\end{equation}

The additional outcome \(P_{\rm leak}\) distinguishes population
outside the embedded computational manifold from ordinary stabilizer
syndromes.

\subsection{S3.7 Preservation of Syndrome Projectors}
\label{S3.7}

For the full syndrome vector

\begin{equation}
\mathbf s
=
(s_1,\ldots,s_{n-k}),
\end{equation}

the corresponding qubit-space syndrome projector is

\begin{equation}
P_{\mathbf s}
=
\prod_{r=1}^{n-k}
\frac{
I_{\mathcal H_{\rm q}}
+
(-1)^{s_r}S_r
}{2}.
\label{eq:S3_original_syndrome_projector}
\end{equation}

Its embedded counterpart is

\begin{equation}
\widetilde P_{\mathbf s}
=
\Phi
P_{\mathbf s}
\Phi^\dagger.
\label{eq:S3_embedded_syndrome_projector}
\end{equation}

Using the operator-transfer property repeatedly yields

\begin{equation}
\widetilde P_{\mathbf s}
=
\prod_{r=1}^{n-k}
\frac{
P_{\rm emb}
+
(-1)^{s_r}
\widetilde S_r
}{2}.
\label{eq:S3_embedded_syndrome_product}
\end{equation}

The embedded syndrome projectors satisfy

\begin{align}
\widetilde P_{\mathbf s}^{\,\dagger}
&=
\widetilde P_{\mathbf s},
\\
\widetilde P_{\mathbf s}^{\,2}
&=
\widetilde P_{\mathbf s},
\\
\widetilde P_{\mathbf s}
\widetilde P_{\mathbf t}
&=
\delta_{\mathbf s,\mathbf t}
\widetilde P_{\mathbf s},
\label{eq:S3_syndrome_orthogonality}\\
\sum_{\mathbf s}
\widetilde P_{\mathbf s}
&=
P_{\rm emb}.
\label{eq:S3_syndrome_completeness}
\end{align}

Hence, the syndrome decomposition of the original qubit Hilbert space
is transferred exactly to a decomposition of the embedded
computational manifold.

Moreover, if a correctable error \(E_a\) has syndrome
\(\mathbf s(a)\), then

\begin{equation}
P_{\mathbf s}
E_aP_{\mathcal C}
=
\delta_{\mathbf s,\mathbf s(a)}
E_aP_{\mathcal C}.
\label{eq:S3_original_error_sector}
\end{equation}

Embedding this relation gives

\begin{equation}
\widetilde P_{\mathbf s}
\widetilde E_a
\widetilde P_{\mathcal C}
=
\delta_{\mathbf s,\mathbf s(a)}
\widetilde E_a
\widetilde P_{\mathcal C}.
\label{eq:S3_embedded_error_sector}
\end{equation}

Thus, each correctable error is assigned to precisely the same syndrome
sector before and after embedding.

\subsection{S3.8 Preservation of Recovery Operations}
\label{S3.8}

We now establish that the complete recovery procedure is transferred
to the multilevel realization. We first consider syndrome-conditioned
recovery operators and then formulate the result for general quantum
channels.

Let \(R_{\mathbf s}\) be the corrective operation associated with the
syndrome \(\mathbf s\), and define

\begin{equation}
\widetilde R_{\mathbf s}
=
\Phi
R_{\mathbf s}
\Phi^\dagger.
\label{eq:S3_embedded_recovery_operator}
\end{equation}

Suppose that error \(E_a\) has syndrome
\(\mathbf s(a)\) and satisfies

\begin{equation}
R_{\mathbf s(a)}
E_a|\psi_L\rangle
=
|\psi_L\rangle
\label{eq:S3_original_operator_recovery}
\end{equation}

for every
\(|\psi_L\rangle\in\mathcal C\).
Then

\begin{align}
\widetilde R_{\mathbf s(a)}
\widetilde E_a
|\widetilde\psi_L\rangle
&=
\Phi
R_{\mathbf s(a)}
E_a
\Phi^\dagger\Phi
|\psi_L\rangle
\nonumber\\
&=
\Phi
R_{\mathbf s(a)}
E_a
|\psi_L\rangle
\nonumber\\
&=
\Phi|\psi_L\rangle
\nonumber\\
&=
|\widetilde\psi_L\rangle.
\label{eq:S3_embedded_operator_recovery}
\end{align}

Therefore, syndrome-conditioned correction is preserved exactly.

For a general recovery channel, let
\(\mathcal R\)
be a completely positive trace-preserving map on
\(\mathcal H_{\rm q}\)
that corrects the original noise channel
\(\mathcal E\) on the code:

\begin{equation}
(\mathcal R\circ\mathcal E)(\rho_L)
=
\rho_L
\label{eq:S3_original_channel_recovery}
\end{equation}

for every logical density operator
\(\rho_L=P_{\mathcal C}\rho_LP_{\mathcal C}\).

On operators supported in
\(\mathcal H_{\rm emb}\), define the transferred recovery map by

\begin{equation}
\widetilde{\mathcal R}_{\rm emb}(X)
=
\Phi
\mathcal R
\left(
\Phi^\dagger X\Phi
\right)
\Phi^\dagger.
\label{eq:S3_compressed_recovery}
\end{equation}

This map is completely positive and trace-preserving on states
supported in \(\mathcal H_{\rm emb}\). To obtain a channel defined on
the entire physical Hilbert space, choose an arbitrary normalized state
\(\tau\in\mathcal D(\mathcal H_D)\) and define

\begin{equation}
\widetilde{\mathcal R}(X)
=
\Phi
\mathcal R
\left(
\Phi^\dagger X\Phi
\right)
\Phi^\dagger
+
\operatorname{Tr}
\left(
P_{\rm leak}X
\right)
\tau.
\label{eq:S3_full_recovery_channel}
\end{equation}

The first term recovers the computational component, whereas the second
term specifies a valid action on the leakage sector. The map
\(\widetilde{\mathcal R}\) is completely positive. Furthermore,

\begin{align}
\operatorname{Tr}
\left[
\widetilde{\mathcal R}(X)
\right]
&=
\operatorname{Tr}
\left[
\Phi^\dagger X\Phi
\right]
+
\operatorname{Tr}
\left[
P_{\rm leak}X
\right]
\nonumber\\
&=
\operatorname{Tr}
\left[
P_{\rm emb}X
\right]
+
\operatorname{Tr}
\left[
P_{\rm leak}X
\right]
\nonumber\\
&=
\operatorname{Tr}(X),
\end{align}

so it is trace preserving on
\(\mathcal H_D\).

Define the embedded noise channel on the computational manifold by

\begin{equation}
\widetilde{\mathcal E}_{\rm emb}(X)
=
\Phi
\mathcal E
\left(
\Phi^\dagger X\Phi
\right)
\Phi^\dagger.
\label{eq:S3_embedded_noise_channel}
\end{equation}

For an embedded logical state

\begin{equation}
\widetilde\rho_L
=
\Phi\rho_L\Phi^\dagger,
\end{equation}

one obtains

\begin{align}
&
\widetilde{\mathcal R}
\left[
\widetilde{\mathcal E}_{\rm emb}
\left(
\widetilde\rho_L
\right)
\right]
\nonumber\\
&\quad=
\Phi
\mathcal R
\left[
\mathcal E(\rho_L)
\right]
\Phi^\dagger
\nonumber\\
&\quad=
\Phi\rho_L\Phi^\dagger
\nonumber\\
&\quad=
\widetilde\rho_L.
\label{eq:S3_embedded_channel_recovery}
\end{align}

The leakage-completion term in
Eq.~(\ref{eq:S3_full_recovery_channel}) does not contribute because
the embedded error state is supported entirely in
\(\mathcal H_{\rm emb}\).

Therefore, every exact recovery procedure for the original code
induces an exact recovery procedure for the embedded code. The action
outside the computational manifold may be chosen independently without
altering the correction of embedded errors.

\subsection{S3.9 Complete Equivalence of the Embedded Code}
\label{S3.9}

The preceding results establish the following complete equivalence:

\begin{equation}
\boxed{
\begin{aligned}
\mathcal C
&\longleftrightarrow
\widetilde{\mathcal C},
\\
P_{\mathcal C}
&\longleftrightarrow
\widetilde P_{\mathcal C},
\\
E_a
&\longleftrightarrow
\widetilde E_a,
\\
S_r
&\longleftrightarrow
\widetilde S_r,
\\
\overline X_\ell,\overline Z_\ell
&\longleftrightarrow
\widetilde{\overline X}_{\ell},
\widetilde{\overline Z}_{\ell},
\\
P_{\mathbf s}
&\longleftrightarrow
\widetilde P_{\mathbf s},
\\
R_{\mathbf s}
&\longleftrightarrow
\widetilde R_{\mathbf s}.
\end{aligned}
}
\label{eq:S3_complete_correspondence}
\end{equation}

The isometric property of \(\Phi\) preserves the geometry of the
logical code space, while the operator-transfer map preserves the full
\(*\)-algebra of the stabilizer code. Consequently, the embedding
preserves the Knill--Laflamme conditions, stabilizer generators,
logical Pauli relations, syndrome assignments, syndrome-measurement
statistics, syndrome projectors, and recovery operations.

The embedded multilevel realization is therefore mathematically and
operationally equivalent to the original stabilizer-code
implementation within the computational manifold. Any additional
levels of the physical system constitute a distinct leakage sector and
do not modify the transferred qubit-code structure.
\section{S4. Applications to Representative Stabilizer Codes}
\label{S4}

The examples presented in this section provide explicit realizations of
the embedding framework for three representative stabilizer codes of
increasing complexity. Together, they illustrate the preservation of
the logical code space, stabilizer generators, syndrome structure, and
recovery operations established in Sec.~S3. The first example
demonstrates the construction for the simplest nontrivial stabilizer
code, the second verifies the framework for a general nondegenerate
stabilizer code, and the third illustrates preservation of a CSS
stabilizer structure.
\subsection{S4.1 Example 1: Three-qubit Repetition Code}

This example provides a complete verification of the embedding
construction for the simplest nontrivial stabilizer code. Every stage of
the embedding procedure is carried out explicitly, beginning with the
binary-to-decimal basis mapping and proceeding through the construction
of the embedded logical states, stabilizer generators, error operators,
syndrome relations, and recovery operations. The purpose of this example
is to demonstrate, step by step, how the general embedding framework
developed in the main text reproduces the complete quantum
error-correction protocol within a single eight-level quantum system.

The three-qubit repetition code encodes one logical qubit into three
physical qubits with logical basis

\begin{equation}
|0_L\rangle=|000\rangle,
\qquad
|1_L\rangle=|111\rangle.
\label{eq:SM_three_codewords}
\end{equation}

The stabilizer generators are

\begin{equation}
S_1=Z_1Z_2,
\qquad
S_2=Z_2Z_3,
\label{eq:SM_three_stabilizers}
\end{equation}

and one possible choice of logical operators is

\begin{equation}
\overline X=X_1X_2X_3,
\qquad
\overline Z=Z_1.
\label{eq:SM_three_logicals}
\end{equation}

The code corrects the single-bit-flip error set

\begin{equation}
\mathcal E
=
\{
I,
X_1,
X_2,
X_3
\}.
\label{eq:SM_three_errors}
\end{equation}

\paragraph{Embedded logical codewords.}

For $D=8$, the embedding map

\begin{equation}
\Phi
=
\sum_{m=0}^{7}
|m\rangle\langle b(m)|
\end{equation}

acts as

\begin{align}
\Phi|000\rangle&=|0\rangle,\\
\Phi|111\rangle&=|7\rangle.
\end{align}

Therefore

\begin{equation}
|\widetilde0_L\rangle
=
\Phi|0_L\rangle
=
|0\rangle,
\qquad
|\widetilde1_L\rangle
=
\Phi|1_L\rangle
=
|7\rangle.
\label{eq:SM_three_embedded_codewords}
\end{equation}

An arbitrary logical state

\begin{equation}
|\psi_L\rangle
=
\alpha|0_L\rangle
+
\beta|1_L\rangle
\end{equation}

is transferred to

\begin{equation}
|\widetilde\psi_L\rangle
=
\alpha|0\rangle
+
\beta|7\rangle.
\end{equation}

\paragraph{Construction and measurement of the embedded stabilizer generators.}

We now show explicitly that the transferred operators
\(\widetilde S_1\) and \(\widetilde S_2\) are the stabilizer generators
of the embedded code. The original generators are

\begin{equation}
S_1=Z_1Z_2=Z\otimes Z\otimes I,
\qquad
S_2=Z_2Z_3=I\otimes Z\otimes Z.
\label{eq:SM_three_original_stabilizers}
\end{equation}

They are transferred to the eight-level realization according to

\begin{equation}
\widetilde S_r
=
\Phi S_r\Phi^\dagger,
\qquad
r=1,2.
\label{eq:SM_three_embedded_stabilizers}
\end{equation}

We use the ordered computational basis

\begin{equation}
\mathcal B_q
=
\{
|000\rangle,
|001\rangle,
|010\rangle,
|011\rangle,
|100\rangle,
|101\rangle,
|110\rangle,
|111\rangle
\},
\end{equation}

and the ordered basis of the eight-level system

\begin{equation}
\mathcal B_8
=
\{
|0\rangle,
|1\rangle,
|2\rangle,
|3\rangle,
|4\rangle,
|5\rangle,
|6\rangle,
|7\rangle
\}.
\end{equation}

The embedding map establishes the correspondence

\begin{align}
|000\rangle&\mapsto|0\rangle,
&
|001\rangle&\mapsto|1\rangle,
&
|010\rangle&\mapsto|2\rangle,
&
|011\rangle&\mapsto|3\rangle,
\nonumber\\
|100\rangle&\mapsto|4\rangle,
&
|101\rangle&\mapsto|5\rangle,
&
|110\rangle&\mapsto|6\rangle,
&
|111\rangle&\mapsto|7\rangle.
\label{eq:SM_three_basis_correspondence}
\end{align}

With respect to these ordered bases, the matrix of \(\Phi\) is \(I_8\).
This does not mean that the physical systems are identical: the columns
of \(\Phi\) refer to basis states of three qubits, whereas its rows refer
to the levels of one eight-dimensional system. The map therefore
identifies the two basis representations while changing the physical
realization.

\subparagraph{First embedded stabilizer.}

In the computational basis \(\mathcal B_q\),

\begin{equation}
S_1
=
Z\otimes Z\otimes I
=
\begin{pmatrix}
 1&0&0&0&0&0&0&0\\
 0&1&0&0&0&0&0&0\\
 0&0&-1&0&0&0&0&0\\
 0&0&0&-1&0&0&0&0\\
 0&0&0&0&-1&0&0&0\\
 0&0&0&0&0&-1&0&0\\
 0&0&0&0&0&0&1&0\\
 0&0&0&0&0&0&0&1
\end{pmatrix}.
\label{eq:SM_three_S1_matrix}
\end{equation}

Consequently,

\begin{align}
\widetilde S_1
&=
\Phi S_1\Phi^\dagger
\nonumber\\
&=
I_8
\begin{pmatrix}
 1&0&0&0&0&0&0&0\\
 0&1&0&0&0&0&0&0\\
 0&0&-1&0&0&0&0&0\\
 0&0&0&-1&0&0&0&0\\
 0&0&0&0&-1&0&0&0\\
 0&0&0&0&0&-1&0&0\\
 0&0&0&0&0&0&1&0\\
 0&0&0&0&0&0&0&1
\end{pmatrix}
I_8
\nonumber\\
&=
\operatorname{diag}
(1,1,-1,-1,-1,-1,1,1).
\label{eq:SM_three_embedded_S1}
\end{align}

As an operator on the eight-level system, this may be written as

\begin{align}
\widetilde S_1
={}&
|0\rangle\langle0|
+
|1\rangle\langle1|
-
|2\rangle\langle2|
-
|3\rangle\langle3|
\nonumber\\
&-
|4\rangle\langle4|
-
|5\rangle\langle5|
+
|6\rangle\langle6|
+
|7\rangle\langle7|.
\label{eq:SM_three_S1_spectral}
\end{align}

Thus, the \(+1\) and \(-1\) eigenspaces of
\(\widetilde S_1\) are

\begin{align}
\mathcal H_{1,+}
&=
\operatorname{span}
\{
|0\rangle,|1\rangle,|6\rangle,|7\rangle
\},
\nonumber\\
\mathcal H_{1,-}
&=
\operatorname{span}
\{
|2\rangle,|3\rangle,|4\rangle,|5\rangle
\}.
\end{align}

\subparagraph{Second embedded stabilizer.}

Similarly, in the computational basis,

\begin{equation}
S_2
=
I\otimes Z\otimes Z
=
\begin{pmatrix}
 1&0&0&0&0&0&0&0\\
 0&-1&0&0&0&0&0&0\\
 0&0&-1&0&0&0&0&0\\
 0&0&0&1&0&0&0&0\\
 0&0&0&0&1&0&0&0\\
 0&0&0&0&0&-1&0&0\\
 0&0&0&0&0&0&-1&0\\
 0&0&0&0&0&0&0&1
\end{pmatrix}.
\label{eq:SM_three_S2_matrix}
\end{equation}

Therefore,

\begin{align}
\widetilde S_2
&=
\Phi S_2\Phi^\dagger
\nonumber\\
&=
\operatorname{diag}
(1,-1,-1,1,1,-1,-1,1).
\label{eq:SM_three_embedded_S2}
\end{align}

Equivalently,

\begin{align}
\widetilde S_2
={}&
|0\rangle\langle0|
-
|1\rangle\langle1|
-
|2\rangle\langle2|
+
|3\rangle\langle3|
\nonumber\\
&+
|4\rangle\langle4|
-
|5\rangle\langle5|
-
|6\rangle\langle6|
+
|7\rangle\langle7|.
\label{eq:SM_three_S2_spectral}
\end{align}

Its eigenspaces are

\begin{align}
\mathcal H_{2,+}
&=
\operatorname{span}
\{
|0\rangle,|3\rangle,|4\rangle,|7\rangle
\},
\nonumber\\
\mathcal H_{2,-}
&=
\operatorname{span}
\{
|1\rangle,|2\rangle,|5\rangle,|6\rangle
\}.
\end{align}

\subparagraph{Verification of the stabilizer properties.}

The transferred operators satisfy all defining properties of stabilizer
generators. First, they are Hermitian,

\begin{equation}
\widetilde S_r^\dagger
=
\widetilde S_r,
\qquad r=1,2,
\end{equation}

and square to the identity,

\begin{equation}
\widetilde S_r^2
=
I_8.
\end{equation}

Their eigenvalues are therefore restricted to \(\pm1\). Moreover, since
the original stabilizers commute,

\begin{align}
[\widetilde S_1,\widetilde S_2]
&=
\Phi[S_1,S_2]\Phi^\dagger
\nonumber\\
&=
0.
\end{align}

Hence \(\widetilde S_1\) and \(\widetilde S_2\) are compatible
observables and may be used jointly for syndrome extraction.

Most importantly, the common \(+1\) eigenspace is

\begin{align}
\mathcal H_{1,+}\cap\mathcal H_{2,+}
&=
\operatorname{span}
\{
|0\rangle,|7\rangle
\}
\nonumber\\
&=
\operatorname{span}
\{
|\widetilde0_L\rangle,
|\widetilde1_L\rangle
\}
\nonumber\\
&\equiv
\widetilde{\mathcal C}.
\label{eq:SM_three_common_positive_eigenspace}
\end{align}

Thus, the embedded code space is exactly the simultaneous \(+1\)
eigenspace of the two transferred generators. This is the defining
property of a stabilizer code and proves that
\(\widetilde S_1\) and \(\widetilde S_2\) are the stabilizer generators
of \(\widetilde{\mathcal C}\).

The same result follows directly by acting on the embedded logical
basis:

\begin{align}
\widetilde S_1|\widetilde0_L\rangle
&=
\widetilde S_1|0\rangle
=
|0\rangle,
&
\widetilde S_2|\widetilde0_L\rangle
&=
\widetilde S_2|0\rangle
=
|0\rangle,
\nonumber\\
\widetilde S_1|\widetilde1_L\rangle
&=
\widetilde S_1|7\rangle
=
|7\rangle,
&
\widetilde S_2|\widetilde1_L\rangle
&=
\widetilde S_2|7\rangle
=
|7\rangle.
\end{align}

Consequently, for an arbitrary embedded logical state

\begin{equation}
|\widetilde\psi_L\rangle
=
\alpha|0\rangle+\beta|7\rangle,
\end{equation}

we obtain

\begin{equation}
\widetilde S_r|\widetilde\psi_L\rangle
=
|\widetilde\psi_L\rangle,
\qquad
r=1,2.
\label{eq:SM_three_embedded_state_stabilized}
\end{equation}

\subparagraph{Projector onto the embedded code space.}

The projectors associated with the outcomes \(\pm1\) of the embedded
stabilizer measurements are

\begin{equation}
\widetilde\Pi_r^{(\pm)}
=
\frac{I_8\pm\widetilde S_r}{2}.
\label{eq:SM_three_stabilizer_measurement_projectors}
\end{equation}

The joint \(+1\) projector is therefore

\begin{align}
\widetilde P_{\mathcal C}
&=
\widetilde\Pi_1^{(+)}
\widetilde\Pi_2^{(+)}
\nonumber\\
&=
\frac{1}{4}
\left(I_8+\widetilde S_1\right)
\left(I_8+\widetilde S_2\right)
\nonumber\\
&=
\operatorname{diag}(1,0,0,0,0,0,0,1)
\nonumber\\
&=
|0\rangle\langle0|
+
|7\rangle\langle7|.
\label{eq:SM_three_embedded_code_projector}
\end{align}

This is precisely the projector onto the embedded logical code space,

\begin{equation}
\widetilde P_{\mathcal C}
=
|\widetilde0_L\rangle
\langle\widetilde0_L|
+
|\widetilde1_L\rangle
\langle\widetilde1_L|.
\end{equation}

Equation~\eqref{eq:SM_three_embedded_code_projector} gives a direct
matrix-level verification that the transferred operators define the
correct embedded stabilizer code.

\subparagraph{Measurement and syndrome interpretation.}

In the original three-qubit realization, syndrome extraction corresponds
to measuring the Pauli products \(Z_1Z_2\) and \(Z_2Z_3\). In the
eight-level realization, the corresponding measurements are the
two-outcome projective measurements

\begin{equation}
\left\{
\widetilde\Pi_r^{(+)},
\widetilde\Pi_r^{(-)}
\right\},
\qquad r=1,2.
\end{equation}

For a density operator \(\widetilde\rho\), the probabilities of the two
outcomes are

\begin{equation}
p_r(\pm)
=
\operatorname{Tr}
\left[
\widetilde\Pi_r^{(\pm)}
\widetilde\rho
\right].
\end{equation}

For every state in the embedded code space,

\begin{equation}
\widetilde P_{\mathcal C}
\widetilde\rho_L
\widetilde P_{\mathcal C}
=
\widetilde\rho_L,
\end{equation}

and therefore

\begin{equation}
p_1(+)=p_2(+)=1.
\end{equation}

As an explicit error example, consider

\begin{equation}
|\widetilde\psi_L\rangle
=
\alpha|0\rangle+\beta|7\rangle.
\end{equation}

A transferred \(X_1\) error gives

\begin{equation}
\widetilde X_1|\widetilde\psi_L\rangle
=
\alpha|4\rangle+\beta|3\rangle.
\label{eq:SM_three_X1_error_state}
\end{equation}

Using the matrices above,

\begin{align}
\widetilde S_1
\left(
\alpha|4\rangle+\beta|3\rangle
\right)
&=
-
\left(
\alpha|4\rangle+\beta|3\rangle
\right),
\nonumber\\
\widetilde S_2
\left(
\alpha|4\rangle+\beta|3\rangle
\right)
&=
+
\left(
\alpha|4\rangle+\beta|3\rangle
\right).
\end{align}

The measured syndrome is therefore

\begin{equation}
(-1,+1),
\end{equation}

which is exactly the syndrome of \(X_1\) in the original three-qubit
repetition code.

Similarly,

\begin{align}
\widetilde X_2|\widetilde\psi_L\rangle
&=
\alpha|2\rangle+\beta|5\rangle,
&
(\widetilde S_1,\widetilde S_2)
&=
(-1,-1),
\nonumber\\
\widetilde X_3|\widetilde\psi_L\rangle
&=
\alpha|1\rangle+\beta|6\rangle,
&
(\widetilde S_1,\widetilde S_2)
&=
(+1,-1).
\end{align}

Thus, the simultaneous eigenspaces of
\(\widetilde S_1\) and \(\widetilde S_2\) reproduce the four syndrome
subspaces of the original repetition code. The embedded operators are
therefore not merely matrix representations of the original
stabilizers; they stabilize the embedded logical space, distinguish the
transferred error spaces, and provide the same syndrome information
required for recovery.
\paragraph{Embedded error operators and syndrome preservation.}

The correctable bit-flip errors are transferred to the eight-level
system according to

\begin{equation}
\widetilde X_j
=
\Phi X_j\Phi^\dagger,
\qquad
j=1,2,3.
\label{eq:SM_three_embedded_errors}
\end{equation}

In the computational basis, \(X_j\) flips the \(j\)th binary digit.
Under the binary-to-level correspondence, this becomes a permutation of
the eight multilevel basis states. For example, \(X_1\) flips the most
significant bit,

\begin{align}
|000\rangle&\longleftrightarrow|100\rangle,
&
|001\rangle&\longleftrightarrow|101\rangle,
\nonumber\\
|010\rangle&\longleftrightarrow|110\rangle,
&
|011\rangle&\longleftrightarrow|111\rangle.
\end{align}

Using the embedding

\[
|000\rangle\mapsto|0\rangle,\quad
|001\rangle\mapsto|1\rangle,\quad\ldots,\quad
|111\rangle\mapsto|7\rangle,
\]

these transformations become

\begin{equation}
|0\rangle\longleftrightarrow|4\rangle,\qquad
|1\rangle\longleftrightarrow|5\rangle,\qquad
|2\rangle\longleftrightarrow|6\rangle,\qquad
|3\rangle\longleftrightarrow|7\rangle.
\end{equation}

Therefore,

\begin{align}
\widetilde X_1
={}&
|4\rangle\langle0|
+
|5\rangle\langle1|
+
|6\rangle\langle2|
+
|7\rangle\langle3|
\nonumber\\
&+
|0\rangle\langle4|
+
|1\rangle\langle5|
+
|2\rangle\langle6|
+
|3\rangle\langle7|.
\label{eq:SM_three_X1_explicit}
\end{align}

Equivalently,

\begin{equation}
\widetilde X_1
=
\sum_{m=0}^{7}
|m\oplus4\rangle\langle m|,
\end{equation}

where \(\oplus\) denotes bitwise XOR. The remaining transferred
bit-flip operators are

\begin{align}
\widetilde X_2
&=
\sum_{m=0}^{7}
|m\oplus2\rangle\langle m|,
\nonumber\\
\widetilde X_3
&=
\sum_{m=0}^{7}
|m\oplus1\rangle\langle m|.
\label{eq:SM_three_X2_X3}
\end{align}

Their explicit actions are

\begin{align}
\widetilde X_2:\quad
&|0\rangle\leftrightarrow|2\rangle,\quad
|1\rangle\leftrightarrow|3\rangle,\quad
|4\rangle\leftrightarrow|6\rangle,\quad
|5\rangle\leftrightarrow|7\rangle,
\nonumber\\
\widetilde X_3:\quad
&|0\rangle\leftrightarrow|1\rangle,\quad
|2\rangle\leftrightarrow|3\rangle,\quad
|4\rangle\leftrightarrow|5\rangle,\quad
|6\rangle\leftrightarrow|7\rangle.
\end{align}

For an arbitrary embedded logical state

\begin{equation}
|\widetilde\psi_L\rangle
=
\alpha|0\rangle+\beta|7\rangle,
\end{equation}

the three errors generate the states

\begin{align}
\widetilde X_1|\widetilde\psi_L\rangle
&=
\alpha|4\rangle+\beta|3\rangle,
\nonumber\\
\widetilde X_2|\widetilde\psi_L\rangle
&=
\alpha|2\rangle+\beta|5\rangle,
\nonumber\\
\widetilde X_3|\widetilde\psi_L\rangle
&=
\alpha|1\rangle+\beta|6\rangle.
\label{eq:SM_three_embedded_error_states}
\end{align}

These are precisely the embedded images of
\(X_j|\psi_L\rangle\).

The syndrome relations are preserved algebraically. If

\begin{equation}
S_rX_j
=
(-1)^{s_r(j)}
X_jS_r,
\end{equation}

then, using \(\Phi^\dagger\Phi=I\),

\begin{align}
\widetilde S_r\widetilde X_j
&=
\Phi S_r\Phi^\dagger
\Phi X_j\Phi^\dagger
\nonumber\\
&=
\Phi S_rX_j\Phi^\dagger
\nonumber\\
&=
(-1)^{s_r(j)}
\Phi X_jS_r\Phi^\dagger
\nonumber\\
&=
(-1)^{s_r(j)}
\widetilde X_j\widetilde S_r.
\label{eq:SM_three_syndrome_preservation}
\end{align}

Thus, each transferred error has the same commutation or
anticommutation relation with the embedded stabilizers as the
corresponding qubit error has with the original stabilizers. The
syndrome assignments are therefore unchanged.

\begin{table}[h]
\caption{Syndromes of the three-qubit repetition code and its embedded
eight-level realization.}
\label{tab:SM_three_syndrome}
\begin{ruledtabular}
\begin{tabular}{cccc}
Error & Stabilizer 1 & Stabilizer 2 & Syndrome\\
\hline
\(I\) or \(\widetilde I\)
    & \(+1\) & \(+1\) & \(00\)\\
\(X_1\) or \(\widetilde X_1\)
    & \(-1\) & \(+1\) & \(10\)\\
\(X_2\) or \(\widetilde X_2\)
    & \(-1\) & \(-1\) & \(11\)\\
\(X_3\) or \(\widetilde X_3\)
    & \(+1\) & \(-1\) & \(01\)
\end{tabular}
\end{ruledtabular}
\end{table}

\paragraph{Embedded recovery operations.}

After the syndrome identifies the error sector, the corresponding
correction is transferred according to

\begin{equation}
\widetilde R_j
=
\Phi R_j\Phi^\dagger.
\label{eq:SM_three_transferred_recovery}
\end{equation}

For the three-qubit repetition code, \(R_j=X_j\). Hence,

\begin{equation}
\widetilde R_j
=
\widetilde X_j,
\qquad
j=1,2,3,
\end{equation}

while the trivial-syndrome sector requires no correction,

\begin{equation}
\widetilde R_0=I_8.
\end{equation}

Because each transferred bit-flip operator is Hermitian and involutory,

\begin{equation}
\widetilde X_j^\dagger
=
\widetilde X_j,
\qquad
\widetilde X_j^2
=
I_8,
\end{equation}

the correction exactly reverses the corresponding error:

\begin{align}
\widetilde R_j
\widetilde X_j
|\widetilde\psi_L\rangle
&=
\widetilde X_j^2
|\widetilde\psi_L\rangle
\nonumber\\
&=
|\widetilde\psi_L\rangle.
\label{eq:SM_three_recovery_identity}
\end{align}

For example,

\begin{align}
\widetilde X_1|\widetilde\psi_L\rangle
&=
\alpha|4\rangle+\beta|3\rangle,
\nonumber\\
\widetilde R_1
\left(
\alpha|4\rangle+\beta|3\rangle
\right)
&=
\alpha|0\rangle+\beta|7\rangle
\nonumber\\
&=
|\widetilde\psi_L\rangle.
\end{align}

Therefore, the embedding preserves not only the stabilizer eigenspaces
and syndrome labels, but also the complete error-correction operation:
each correctable qubit error is mapped to a distinct multilevel error
sector, identified by the same syndrome and reversed by the transferred
recovery operator.

\subsection{S4.2 Five-qubit perfect code}

The $[[5,1,3]]$ perfect code encodes one logical qubit into five
physical qubits and corrects every arbitrary error acting on any one
physical qubit \cite{Bennett1996,Laflamme1996}. It therefore provides a
more demanding illustration of the embedding construction than the
three-qubit repetition code. In particular, the example demonstrates
the transfer of non-diagonal stabilizer generators, arbitrary
single-qubit Pauli errors, sixteen distinct syndrome sectors, and the
corresponding recovery operations to a single thirty-two-level system.

A cyclic set of stabilizer generators is

\begin{align}
S_1&=XZZXI, &
S_2&=IXZZX,
\nonumber\\
S_3&=XIXZZ, &
S_4&=ZXIXZ.
\label{eq:SM_five_stabilizers}
\end{align}

A convenient choice of logical Pauli operators is

\begin{equation}
\overline X=XXXXX,
\qquad
\overline Z=ZZZZZ.
\label{eq:SM_five_logicals}
\end{equation}

A normalized logical-zero state is

\begin{align}
\ket{0_L}
=
\frac{1}{4}\big(&
 \ket{00000}
+\ket{10010}
+\ket{01001}
+\ket{10100}
+\ket{01010}
\nonumber\\
&-\ket{11011}
-\ket{00110}
-\ket{11000}
-\ket{11101}
-\ket{00011}
\nonumber\\
&-\ket{11110}
-\ket{01111}
-\ket{10001}
-\ket{01100}
-\ket{10111}
+\ket{00101}
\big),
\label{eq:SM_five_zero}
\end{align}

and the logical-one state is

\begin{equation}
\ket{1_L}
=
\overline X\ket{0_L}.
\label{eq:SM_five_one}
\end{equation}

The correctable Pauli family is

\begin{equation}
\mathcal E_5
=
\{I\}
\cup
\{X_j,Y_j,Z_j:j=1,\ldots,5\}.
\label{eq:SM_five_error_set}
\end{equation}

\paragraph{Embedding into a thirty-two-level system.}

For $D=32$, the embedding map is

\begin{equation}
\Phi
=
\sum_{m=0}^{31}
\ket{m}\bra{b(m)},
\label{eq:SM_five_phi}
\end{equation}

where $\ket{b(m)}$ denotes the five-qubit computational basis state
whose binary representation is the integer $m$. Explicitly,

\begin{equation}
\ket{b_1b_2b_3b_4b_5}
\longmapsto
\ket{
16b_1+8b_2+4b_3+2b_4+b_5
}.
\label{eq:SM_five_binary_decimal}
\end{equation}

Thus, for example,

\begin{align}
\ket{00000}&\longmapsto\ket{0},
&
\ket{10010}&\longmapsto\ket{18},
\nonumber\\
\ket{01001}&\longmapsto\ket{9},
&
\ket{10100}&\longmapsto\ket{20}.
\end{align}

Using the ordered computational basis

\begin{equation}
\mathcal B_q
=
\{
\ket{00000},\ket{00001},\ldots,\ket{11111}
\},
\end{equation}

and the ordered multilevel basis

\begin{equation}
\mathcal B_{32}
=
\{
\ket0,\ket1,\ldots,\ket{31}
\},
\end{equation}

the matrix representation of $\Phi$ is numerically $I_{32}$. Its
domain and codomain are nevertheless physically different: its columns
label states of five physical qubits, whereas its rows label the levels
of one thirty-two-dimensional system.

The isometric identities are

\begin{equation}
\Phi^\dagger\Phi
=
I_{\mathcal H_q},
\qquad
\Phi\Phi^\dagger
=
I_{32},
\label{eq:SM_five_isometry}
\end{equation}

where the second identity follows because the thirty-two-dimensional
physical space is exactly the embedded image of the five-qubit Hilbert
space in this example.

\paragraph{Embedded logical codewords.}

Applying $\Phi$ term by term to Eq.~\eqref{eq:SM_five_zero} gives

\begin{align}
\ket{\widetilde 0_L}
&=
\Phi\ket{0_L}
\nonumber\\
&=
\frac{1}{4}\big(
 \ket0
+\ket{18}
+\ket9
+\ket{20}
+\ket{10}
-\ket{27}
-\ket6
-\ket{24}
\nonumber\\
&\hspace{1.15cm}
-\ket{29}
-\ket3
-\ket{30}
-\ket{15}
-\ket{17}
-\ket{12}
-\ket{23}
+\ket5
\big).
\label{eq:SM_five_embedded_zero}
\end{align}

The transferred logical-$X$ operator is

\begin{equation}
\widetilde{\overline X}
=
\Phi\overline X\Phi^\dagger.
\end{equation}

Because $\overline X=X^{\otimes5}$ complements all five binary digits,

\begin{equation}
\widetilde{\overline X}
=
\sum_{m=0}^{31}
\ket{m\oplus31}\bra m,
\label{eq:SM_five_embedded_logical_X}
\end{equation}

where $31=(11111)_2$ and $\oplus$ denotes bitwise XOR. Therefore,

\begin{equation}
\ket{\widetilde1_L}
=
\Phi\ket{1_L}
=
\widetilde{\overline X}
\ket{\widetilde0_L}.
\label{eq:SM_five_embedded_one}
\end{equation}

Similarly,

\begin{equation}
\widetilde{\overline Z}
=
\Phi\overline Z\Phi^\dagger
=
\sum_{m=0}^{31}
(-1)^{b_1(m)+\cdots+b_5(m)}
\ket m\bra m.
\label{eq:SM_five_embedded_logical_Z}
\end{equation}

Consequently,

\begin{align}
\widetilde{\overline Z}\ket{\widetilde0_L}
&=
\ket{\widetilde0_L},
\nonumber\\
\widetilde{\overline Z}\ket{\widetilde1_L}
&=
-\ket{\widetilde1_L},
\end{align}

and

\begin{align}
\widetilde{\overline X}\ket{\widetilde0_L}
&=
\ket{\widetilde1_L},
\nonumber\\
\widetilde{\overline X}\ket{\widetilde1_L}
&=
\ket{\widetilde0_L}.
\end{align}

An arbitrary encoded state

\begin{equation}
\ket{\psi_L}
=
\alpha\ket{0_L}
+
\beta\ket{1_L}
\end{equation}

is therefore represented as

\begin{equation}
\ket{\widetilde\psi_L}
=
\Phi\ket{\psi_L}
=
\alpha\ket{\widetilde0_L}
+
\beta\ket{\widetilde1_L},
\label{eq:SM_five_general_embedded_state}
\end{equation}

with the logical amplitudes $\alpha$ and $\beta$ unchanged.

\paragraph{Transferred single-qubit Pauli operators.}

Before constructing the embedded stabilizers, it is useful to obtain
the exact multilevel representation of the single-qubit Pauli
operators. Let

\begin{equation}
w_j=2^{5-j},
\qquad
(w_1,w_2,w_3,w_4,w_5)
=
(16,8,4,2,1),
\label{eq:SM_five_binary_weights}
\end{equation}

and let $b_j(m)\in\{0,1\}$ denote the $j$th binary digit of $m$, with
$b_1$ the most significant digit.

A bit flip on virtual qubit $j$ complements only the corresponding
binary digit. Hence,

\begin{equation}
\widetilde X_j
=
\Phi X_j\Phi^\dagger
=
\sum_{m=0}^{31}
\ket{m\oplus w_j}\bra m.
\label{eq:SM_five_embedded_X}
\end{equation}

Its matrix elements in the multilevel basis are therefore

\begin{equation}
\bra n\widetilde X_j\ket m
=
\delta_{n,m\oplus w_j}.
\label{eq:SM_five_X_matrix_elements}
\end{equation}

Thus, $\widetilde X_j$ is an exact permutation matrix. For example,

\begin{align}
\widetilde X_1
={}&
\ket{16}\bra0+\ket{17}\bra1+\cdots+\ket{31}\bra{15}
\nonumber\\
&+
\ket0\bra{16}+\ket1\bra{17}+\cdots+\ket{15}\bra{31}.
\label{eq:SM_five_X1_explicit}
\end{align}

Accordingly,

\begin{equation}
\ket0
\stackrel{\widetilde X_1}{\longleftrightarrow}
\ket{16},
\qquad
\ket5
\stackrel{\widetilde X_1}{\longleftrightarrow}
\ket{21},
\end{equation}

which are precisely the embedded images of

\begin{equation}
\ket{00000}
\stackrel{X_1}{\longleftrightarrow}
\ket{10000},
\qquad
\ket{00101}
\stackrel{X_1}{\longleftrightarrow}
\ket{10101}.
\end{equation}

A phase flip on virtual qubit $j$ is diagonal in the computational
basis and produces the phase $(-1)^{b_j(m)}$. Its embedded
representation is therefore

\begin{equation}
\widetilde Z_j
=
\Phi Z_j\Phi^\dagger
=
\sum_{m=0}^{31}
(-1)^{b_j(m)}
\ket m\bra m,
\label{eq:SM_five_embedded_Z}
\end{equation}

with matrix elements

\begin{equation}
\bra n\widetilde Z_j\ket m
=
(-1)^{b_j(m)}
\delta_{nm}.
\label{eq:SM_five_Z_matrix_elements}
\end{equation}

Finally,

\begin{equation}
\widetilde Y_j
=
\Phi Y_j\Phi^\dagger
=
i\widetilde X_j\widetilde Z_j,
\label{eq:SM_five_embedded_Y}
\end{equation}

so that

\begin{equation}
\widetilde Y_j
=
i
\sum_{m=0}^{31}
(-1)^{b_j(m)}
\ket{m\oplus w_j}\bra m.
\label{eq:SM_five_Y_explicit}
\end{equation}

Thus, in the multilevel representation,

\[
\widetilde X_j
\text{ is a level permutation},\qquad
\widetilde Z_j
\text{ is diagonal},\qquad
\widetilde Y_j
\text{ is a signed level permutation}.
\]

Because the transfer is an isometric conjugation, the Pauli algebra is
preserved:

\begin{align}
\widetilde X_j^2
&=
\widetilde Y_j^2
=
\widetilde Z_j^2
=
I_{32},
\nonumber\\
\widetilde X_j\widetilde Z_j
&=
-\widetilde Z_j\widetilde X_j,
\nonumber\\
[\widetilde P_j,\widetilde Q_k]
&=
0,
\qquad
j\neq k,
\label{eq:SM_five_transferred_Pauli_algebra}
\end{align}

where $P,Q\in\{X,Y,Z\}$.

\paragraph{Construction of the embedded stabilizer generators.}

The stabilizer generators are transferred according to

\begin{equation}
\widetilde S_r
=
\Phi S_r\Phi^\dagger,
\qquad
r=1,\ldots,4.
\label{eq:SM_five_embedded_stabilizers_definition}
\end{equation}

Using Eq.~\eqref{eq:SM_five_isometry} between successive factors gives

\begin{align}
\widetilde S_1
&=
\widetilde X_1
\widetilde Z_2
\widetilde Z_3
\widetilde X_4,
\nonumber\\
\widetilde S_2
&=
\widetilde X_2
\widetilde Z_3
\widetilde Z_4
\widetilde X_5,
\nonumber\\
\widetilde S_3
&=
\widetilde X_1
\widetilde X_3
\widetilde Z_4
\widetilde Z_5,
\nonumber\\
\widetilde S_4
&=
\widetilde Z_1
\widetilde X_2
\widetilde X_4
\widetilde Z_5.
\label{eq:SM_five_embedded_stabilizers}
\end{align}

Unlike the stabilizers of the three-qubit repetition code, these
operators are not diagonal in the multilevel basis because each
generator contains two $X$ factors. They are instead signed permutation
matrices.

We now derive the first generator explicitly. Acting on a computational
basis state gives

\begin{equation}
S_1
\ket{b_1b_2b_3b_4b_5}
=
(-1)^{b_2+b_3}
\ket{
b_1\oplus1,\,
b_2,\,
b_3,\,
b_4\oplus1,\,
b_5
}.
\label{eq:SM_five_S1_computational_action}
\end{equation}

The first and fourth binary digits have weights $16$ and $2$,
respectively. Therefore, after embedding,

\begin{equation}
\widetilde S_1\ket m
=
(-1)^{b_2(m)+b_3(m)}
\ket{m\oplus16\oplus2}.
\label{eq:SM_five_S1_level_action}
\end{equation}

Since $16\oplus2=18$, the exact multilevel operator is

\begin{equation}
\boxed{
\widetilde S_1
=
\sum_{m=0}^{31}
(-1)^{b_2(m)+b_3(m)}
\ket{m\oplus18}\bra m
}.
\label{eq:SM_five_S1_signed_permutation}
\end{equation}

Its matrix elements are

\begin{equation}
\bra n\widetilde S_1\ket m
=
(-1)^{b_2(m)+b_3(m)}
\delta_{n,m\oplus18}.
\label{eq:SM_five_S1_matrix_elements}
\end{equation}

For example,

\begin{align}
\widetilde S_1\ket0
&=
\ket{18},
\nonumber\\
\widetilde S_1\ket{18}
&=
\ket0,
\nonumber\\
\widetilde S_1\ket9
&=
-\ket{27},
\nonumber\\
\widetilde S_1\ket{27}
&=
-\ket9.
\label{eq:SM_five_S1_examples}
\end{align}

These transformations can be checked directly from the binary states:

\begin{align}
\ket{00000}
&\stackrel{S_1}{\longrightarrow}
\ket{10010},
\nonumber\\
\ket{10010}
&\stackrel{S_1}{\longrightarrow}
\ket{00000},
\nonumber\\
\ket{01001}
&\stackrel{S_1}{\longrightarrow}
-\ket{11011},
\nonumber\\
\ket{11011}
&\stackrel{S_1}{\longrightarrow}
-\ket{01001}.
\end{align}

Thus, the positive and negative amplitudes appearing in
$\ket{\widetilde0_L}$ are paired precisely so that
$\widetilde S_1$ leaves the complete superposition invariant.

Applying the same argument to the remaining generators gives

\begin{align}
\widetilde S_2
&=
\sum_{m=0}^{31}
(-1)^{b_3(m)+b_4(m)}
\ket{m\oplus8\oplus1}\bra m
\nonumber\\
&=
\sum_{m=0}^{31}
(-1)^{b_3(m)+b_4(m)}
\ket{m\oplus9}\bra m,
\label{eq:SM_five_S2_signed_permutation}
\\[1mm]
\widetilde S_3
&=
\sum_{m=0}^{31}
(-1)^{b_4(m)+b_5(m)}
\ket{m\oplus16\oplus4}\bra m
\nonumber\\
&=
\sum_{m=0}^{31}
(-1)^{b_4(m)+b_5(m)}
\ket{m\oplus20}\bra m,
\label{eq:SM_five_S3_signed_permutation}
\\[1mm]
\widetilde S_4
&=
\sum_{m=0}^{31}
(-1)^{b_1(m)+b_5(m)}
\ket{m\oplus8\oplus2}\bra m
\nonumber\\
&=
\sum_{m=0}^{31}
(-1)^{b_1(m)+b_5(m)}
\ket{m\oplus10}\bra m.
\label{eq:SM_five_S4_signed_permutation}
\end{align}

Equations~\eqref{eq:SM_five_S1_signed_permutation}--
\eqref{eq:SM_five_S4_signed_permutation} are the exact
thirty-two-dimensional matrix representations of the embedded
stabilizer generators.

\paragraph{Verification that the transferred operators are stabilizers.}

The transferred generators satisfy all defining properties of
stabilizer generators. First, they are Hermitian,

\begin{equation}
\widetilde S_r^\dagger
=
\widetilde S_r,
\qquad
r=1,\ldots,4.
\label{eq:SM_five_stabilizer_Hermitian}
\end{equation}

They are also involutions,

\begin{align}
\widetilde S_r^2
&=
\Phi S_r\Phi^\dagger
\Phi S_r\Phi^\dagger
\nonumber\\
&=
\Phi S_r^2\Phi^\dagger
\nonumber\\
&=
I_{32}.
\label{eq:SM_five_stabilizer_involution}
\end{align}

Their eigenvalues are therefore restricted to $\pm1$. Moreover,

\begin{align}
[\widetilde S_r,\widetilde S_s]
&=
\Phi[S_r,S_s]\Phi^\dagger
\nonumber\\
&=
0,
\qquad
r,s=1,\ldots,4,
\label{eq:SM_five_stabilizer_commutation}
\end{align}

because the original five-qubit stabilizer generators commute.

Most importantly, the embedded logical states are simultaneous $+1$
eigenstates. Since

\begin{equation}
S_r\ket{\psi_L}
=
\ket{\psi_L},
\end{equation}

we obtain

\begin{align}
\widetilde S_r\ket{\widetilde\psi_L}
&=
\Phi S_r\Phi^\dagger\Phi\ket{\psi_L}
\nonumber\\
&=
\Phi S_r\ket{\psi_L}
\nonumber\\
&=
\Phi\ket{\psi_L}
\nonumber\\
&=
\ket{\widetilde\psi_L}.
\label{eq:SM_five_embedded_stabilization}
\end{align}

In particular,

\begin{equation}
\widetilde S_r\ket{\widetilde0_L}
=
\ket{\widetilde0_L},
\qquad
\widetilde S_r\ket{\widetilde1_L}
=
\ket{\widetilde1_L},
\qquad
r=1,\ldots,4.
\label{eq:SM_five_logical_states_stabilized}
\end{equation}

Therefore, for

\begin{equation}
\ket{\widetilde\psi_L}
=
\alpha\ket{\widetilde0_L}
+
\beta\ket{\widetilde1_L},
\end{equation}

one has

\begin{equation}
\widetilde S_r\ket{\widetilde\psi_L}
=
\ket{\widetilde\psi_L},
\qquad
r=1,\ldots,4.
\end{equation}

The joint $+1$ projector of the four embedded stabilizers is

\begin{equation}
\widetilde P_{\mathcal C}
=
\prod_{r=1}^{4}
\frac{I_{32}+\widetilde S_r}{2}.
\label{eq:SM_five_joint_stabilizer_projector}
\end{equation}

Using $\widetilde S_r=\Phi S_r\Phi^\dagger$, this becomes

\begin{align}
\widetilde P_{\mathcal C}
&=
\Phi
\left[
\prod_{r=1}^{4}
\frac{I+S_r}{2}
\right]
\Phi^\dagger
\nonumber\\
&=
\Phi P_{\mathcal C}\Phi^\dagger.
\end{align}

Since the original common $+1$ eigenspace is

\begin{equation}
\mathcal C
=
\operatorname{span}
\{
\ket{0_L},\ket{1_L}
\},
\end{equation}

the embedded projector is

\begin{equation}
\boxed{
\widetilde P_{\mathcal C}
=
\ket{\widetilde0_L}\bra{\widetilde0_L}
+
\ket{\widetilde1_L}\bra{\widetilde1_L}
}.
\label{eq:SM_five_embedded_code_projector}
\end{equation}

Thus, the simultaneous $+1$ eigenspace of
$\widetilde S_1,\ldots,\widetilde S_4$ is exactly the two-dimensional
embedded logical code space,

\begin{equation}
\widetilde{\mathcal C}
=
\operatorname{span}
\{
\ket{\widetilde0_L},\ket{\widetilde1_L}
\}.
\label{eq:SM_five_common_positive_eigenspace}
\end{equation}

This proves that the transferred operators are genuinely the
stabilizer generators of the embedded code, rather than merely formal
matrix representations of the original qubit operators.

\paragraph{Measurement of the embedded stabilizers.}

In the original realization, syndrome extraction corresponds to
measuring the Pauli products $S_r$. In the thirty-two-level
realization, one measures the transferred two-outcome observables
$\widetilde S_r$. The projectors associated with outcomes $\pm1$ are

\begin{equation}
\widetilde\Pi_r^{(\pm)}
=
\frac{I_{32}\pm\widetilde S_r}{2}.
\label{eq:SM_five_stabilizer_measurement_projectors}
\end{equation}

For a multilevel density operator $\widetilde\rho$, the probability of
outcome $\pm1$ is

\begin{equation}
p_r(\pm)
=
\operatorname{Tr}
\left[
\widetilde\Pi_r^{(\pm)}
\widetilde\rho
\right].
\label{eq:SM_five_stabilizer_probabilities}
\end{equation}

For every embedded logical state $\widetilde\rho_L$,

\begin{equation}
\widetilde P_{\mathcal C}
\widetilde\rho_L
\widetilde P_{\mathcal C}
=
\widetilde\rho_L,
\end{equation}

and hence

\begin{equation}
p_r(+)=1,
\qquad
r=1,\ldots,4.
\end{equation}

The four measurement results therefore yield the no-error syndrome
$0000$ for every state in the embedded logical code space.

\paragraph{Embedded error operators and syndrome preservation.}

For every correctable error $E_a\in\mathcal E_5$, define

\begin{equation}
\widetilde E_a
=
\Phi E_a\Phi^\dagger.
\label{eq:SM_five_transferred_error}
\end{equation}

The embedded error acts on the logical state according to

\begin{align}
\widetilde E_a
\ket{\widetilde\psi_L}
&=
\Phi E_a\Phi^\dagger\Phi\ket{\psi_L}
\nonumber\\
&=
\Phi E_a\ket{\psi_L}.
\label{eq:SM_five_error_intertwining}
\end{align}

Thus, the corrupted multilevel state is exactly the embedded image of
the corresponding corrupted five-qubit state.

Let the syndrome bits of $E_a$ be defined by

\begin{equation}
S_rE_a
=
(-1)^{s_r(a)}
E_aS_r,
\qquad
r=1,\ldots,4.
\label{eq:SM_five_original_syndrome}
\end{equation}

Then

\begin{align}
\widetilde S_r\widetilde E_a
&=
\Phi S_r\Phi^\dagger
\Phi E_a\Phi^\dagger
\nonumber\\
&=
\Phi S_rE_a\Phi^\dagger
\nonumber\\
&=
(-1)^{s_r(a)}
\Phi E_aS_r\Phi^\dagger
\nonumber\\
&=
(-1)^{s_r(a)}
\widetilde E_a\widetilde S_r.
\label{eq:SM_five_embedded_syndrome_relation}
\end{align}

Therefore,

\begin{equation}
\boxed{
\mathbf s(\widetilde E_a)
=
\mathbf s(E_a)
}.
\label{eq:SM_five_syndrome_preservation}
\end{equation}

The four commuting stabilizer measurements generate $2^4=16$ syndrome
sectors. These correspond exactly to the identity branch and the
fifteen single-qubit Pauli errors.

\begin{table}[h]
\caption{Syndromes of the $[[5,1,3]]$ code. The same syndrome labels
apply to the transferred errors and stabilizers in the thirty-two-level
realization.}
\label{tab:SM_five_syndromes}
\begin{ruledtabular}
\begin{tabular}{cc@{\qquad}cc}
Error & Syndrome & Error & Syndrome\\
\hline
$I$   & $0000$ & $X_1$ & $0001$\\
$Y_1$ & $1011$ & $Z_1$ & $1010$\\
$X_2$ & $1000$ & $Y_2$ & $1101$\\
$Z_2$ & $0101$ & $X_3$ & $1100$\\
$Y_3$ & $1110$ & $Z_3$ & $0010$\\
$X_4$ & $0110$ & $Y_4$ & $1111$\\
$Z_4$ & $1001$ & $X_5$ & $0011$\\
$Y_5$ & $0111$ & $Z_5$ & $0100$
\end{tabular}
\end{ruledtabular}
\end{table}

\paragraph{Explicit syndrome example.}

Consider the error $X_1$. From the stabilizers in
Eq.~\eqref{eq:SM_five_stabilizers}, $X_1$ commutes with
$S_1,S_2,$ and $S_3$, but anticommutes with $S_4$. Therefore,

\begin{equation}
\mathbf s(X_1)=0001.
\end{equation}

In the embedded representation,

\begin{align}
\widetilde S_1\widetilde X_1
&=
+\widetilde X_1\widetilde S_1,
\nonumber\\
\widetilde S_2\widetilde X_1
&=
+\widetilde X_1\widetilde S_2,
\nonumber\\
\widetilde S_3\widetilde X_1
&=
+\widetilde X_1\widetilde S_3,
\nonumber\\
\widetilde S_4\widetilde X_1
&=
-\widetilde X_1\widetilde S_4.
\label{eq:SM_five_X1_embedded_commutation}
\end{align}

Acting on an arbitrary embedded logical state gives

\begin{align}
\widetilde S_1
\widetilde X_1
\ket{\widetilde\psi_L}
&=
+\widetilde X_1
\ket{\widetilde\psi_L},
\nonumber\\
\widetilde S_2
\widetilde X_1
\ket{\widetilde\psi_L}
&=
+\widetilde X_1
\ket{\widetilde\psi_L},
\nonumber\\
\widetilde S_3
\widetilde X_1
\ket{\widetilde\psi_L}
&=
+\widetilde X_1
\ket{\widetilde\psi_L},
\nonumber\\
\widetilde S_4
\widetilde X_1
\ket{\widetilde\psi_L}
&=
-\widetilde X_1
\ket{\widetilde\psi_L}.
\end{align}

Hence, measurement of the four embedded stabilizers produces

\begin{equation}
(+1,+1,+1,-1),
\end{equation}

or equivalently the syndrome

\begin{equation}
\mathbf s(\widetilde X_1)=0001.
\end{equation}

This is identical to the syndrome of $X_1$ in the original five-qubit
code.

More generally, the syndrome projector associated with
$\mathbf s=(s_1,s_2,s_3,s_4)$ is

\begin{equation}
\widetilde P_{\mathbf s}
=
\prod_{r=1}^{4}
\frac{
I_{32}
+
(-1)^{s_r}\widetilde S_r
}{2}.
\label{eq:SM_five_syndrome_projector}
\end{equation}

For an error $E_a$ with syndrome $\mathbf s(a)$,

\begin{equation}
\widetilde P_{\mathbf s(a)}
\widetilde E_a
\ket{\widetilde\psi_L}
=
\widetilde E_a
\ket{\widetilde\psi_L},
\end{equation}

while

\begin{equation}
\widetilde P_{\mathbf t}
\widetilde E_a
\ket{\widetilde\psi_L}
=
0,
\qquad
\mathbf t\neq\mathbf s(a).
\end{equation}

Therefore, the sixteen simultaneous eigenspaces of the transferred
stabilizers reproduce exactly the sixteen error sectors of the
five-qubit perfect code.

\paragraph{Embedded recovery operations.}

Let $R_{\mathbf s}$ denote the recovery associated with syndrome
$\mathbf s$. Its embedded representative is

\begin{equation}
\widetilde R_{\mathbf s}
=
\Phi R_{\mathbf s}\Phi^\dagger.
\label{eq:SM_five_embedded_recovery}
\end{equation}

For the nondegenerate five-qubit code, every nontrivial syndrome
identifies a unique single-qubit Pauli error $E_a$. One may therefore
choose

\begin{equation}
R_{\mathbf s(a)}
=
E_a^\dagger.
\end{equation}

Since Pauli operators are Hermitian and unitary,

\begin{equation}
E_a^\dagger
=
E_a,
\qquad
E_a^2=I,
\end{equation}

and hence

\begin{equation}
\widetilde R_{\mathbf s(a)}
=
\widetilde E_a.
\label{eq:SM_five_embedded_recovery_operator}
\end{equation}

For an arbitrary embedded logical state,

\begin{align}
\widetilde R_{\mathbf s(a)}
\widetilde E_a
\ket{\widetilde\psi_L}
&=
\Phi E_a^\dagger\Phi^\dagger
\Phi E_a\Phi^\dagger
\Phi\ket{\psi_L}
\nonumber\\
&=
\Phi E_a^\dagger E_a\ket{\psi_L}
\nonumber\\
&=
\Phi\ket{\psi_L}
\nonumber\\
&=
\ket{\widetilde\psi_L}.
\label{eq:SM_five_exact_state_recovery}
\end{align}

For a logical density operator

\begin{equation}
\widetilde\rho_L
=
\Phi\rho_L\Phi^\dagger,
\end{equation}

the same result is

\begin{align}
\widetilde R_{\mathbf s(a)}
\widetilde E_a
\widetilde\rho_L
\widetilde E_a^\dagger
\widetilde R_{\mathbf s(a)}^\dagger
&=
\widetilde\rho_L.
\label{eq:SM_five_exact_recovery}
\end{align}

As a concrete example, if the measured syndrome is $0001$, the
identified error is $\widetilde X_1$. The corresponding recovery is

\begin{equation}
\widetilde R_{0001}
=
\widetilde X_1.
\end{equation}

Thus,

\begin{align}
\widetilde R_{0001}
\widetilde X_1
\ket{\widetilde\psi_L}
&=
\widetilde X_1^2
\ket{\widetilde\psi_L}
\nonumber\\
&=
\ket{\widetilde\psi_L}.
\end{align}

For the trivial syndrome $0000$, no correction is required, and one
may take

\begin{equation}
\widetilde R_{0000}
=
I_{32}.
\end{equation}

\paragraph{Conclusion of the five-qubit example.}

The thirty-two-level realization preserves the complete
error-correction structure of the $[[5,1,3]]$ perfect code. In
particular,

\begin{enumerate}
\item the logical codewords are mapped isometrically to
      $\ket{\widetilde0_L}$ and $\ket{\widetilde1_L}$;

\item the transferred generators
      $\widetilde S_1,\ldots,\widetilde S_4$ are commuting Hermitian
      involutions whose common $+1$ eigenspace is exactly
      $\widetilde{\mathcal C}$;

\item every single-qubit Pauli error is mapped to an explicit
      multilevel permutation, phase operator, or signed permutation;

\item all sixteen syndrome assignments are preserved exactly; and

\item the transferred recovery operator restores the original embedded
      logical state.
\end{enumerate}

The embedded operators therefore reproduce the full logical,
stabilizer, syndrome, and recovery structure of the five-qubit perfect
code within a single thirty-two-level system.

This conclusion applies to the transferred error family
$\widetilde{\mathcal E}_5$. It does not by itself imply correction of
every physical operator acting on an arbitrary multilevel extension.
If the physical Hilbert space has dimension $D>32$, errors that couple
the embedded sector to additional leakage levels must be analyzed
separately using the corresponding Knill--Laflamme conditions.
\subsection{S4.3 Seven-qubit Steane code}
\label{subsec:SM_steane}

The $[[7,1,3]]$ Steane code encodes one logical qubit into seven
physical qubits and corrects an arbitrary error acting on any one
physical qubit. It is a Calderbank--Shor--Steane (CSS) stabilizer code
derived from the classical $[7,4,3]$ Hamming code
\cite{Steane1996}. It therefore provides a structurally distinct
illustration of the embedding construction. In contrast to the
five-qubit perfect code, the Steane code separates its stabilizer
generators into independent $X$-type and $Z$-type parity checks. The
example consequently demonstrates that the embedding preserves the CSS
structure, six-bit syndrome assignments, and recovery of arbitrary
single-qubit Pauli errors in a single one-hundred-and-twenty-eight-level
system.

A convenient set of stabilizer generators is

\begin{align}
S_1&=IIIXXXX, &
S_4&=IIIZZZZ,
\nonumber\\
S_2&=IXXIIXX, &
S_5&=IZZIIZZ,
\nonumber\\
S_3&=XIXIXIX, &
S_6&=ZIZIZIZ.
\label{eq:SM_steane_stabilizers}
\end{align}

Explicitly,

\begin{align}
S_1&=
I_1I_2I_3X_4X_5X_6X_7,
&
S_4&=
I_1I_2I_3Z_4Z_5Z_6Z_7,
\nonumber\\
S_2&=
I_1X_2X_3I_4I_5X_6X_7,
&
S_5&=
I_1Z_2Z_3I_4I_5Z_6Z_7,
\nonumber\\
S_3&=
X_1I_2X_3I_4X_5I_6X_7,
&
S_6&=
Z_1I_2Z_3I_4Z_5I_6Z_7.
\label{eq:SM_steane_stabilizers_explicit}
\end{align}

The first three generators are $X$-type stabilizers, whereas the final
three are the corresponding $Z$-type stabilizers. A convenient choice
of logical Pauli operators is

\begin{equation}
\overline X
=
X^{\otimes7},
\qquad
\overline Z
=
Z^{\otimes7}.
\label{eq:SM_steane_logicals}
\end{equation}

Let

\begin{align}
\mathcal C^\perp
=
\{&
0000000,\,
0001111,\,
0110011,\,
0111100,
\nonumber\\
&
1010101,\,
1011010,\,
1100110,\,
1101001
\}.
\label{eq:SM_steane_dual_code}
\end{align}

The logical basis states are

\begin{align}
\ket{0_L}
&=
\frac{1}{\sqrt8}
\sum_{c\in\mathcal C^\perp}
\ket c,
\nonumber\\
\ket{1_L}
&=
\overline X\ket{0_L}
=
X^{\otimes7}\ket{0_L}.
\label{eq:SM_steane_codewords}
\end{align}

Equivalently,

\begin{align}
\ket{0_L}
=
\frac{1}{\sqrt8}
\big(&
\ket{0000000}
+\ket{0001111}
+\ket{0110011}
+\ket{0111100}
\nonumber\\
&
+\ket{1010101}
+\ket{1011010}
+\ket{1100110}
+\ket{1101001}
\big),
\label{eq:SM_steane_zero_explicit}
\end{align}

and

\begin{align}
\ket{1_L}
=
\frac{1}{\sqrt8}
\big(&
\ket{1111111}
+\ket{1110000}
+\ket{1001100}
+\ket{1000011}
\nonumber\\
&
+\ket{0101010}
+\ket{0100101}
+\ket{0011001}
+\ket{0010110}
\big).
\label{eq:SM_steane_one_explicit}
\end{align}

The correctable single-qubit Pauli family is

\begin{equation}
\mathcal E_7
=
\{I\}
\cup
\{X_j,Y_j,Z_j:j=1,\ldots,7\}.
\label{eq:SM_steane_error_set}
\end{equation}

This family contains the identity and the twenty-one nontrivial
single-qubit Pauli errors.

\paragraph{Embedding into a one-hundred-and-twenty-eight-level system.}

For $D=128$, the embedding map is

\begin{equation}
\Phi
=
\sum_{m=0}^{127}
\ket m\bra{b(m)},
\label{eq:SM_steane_phi}
\end{equation}

where $\ket{b(m)}$ denotes the seven-qubit computational basis state
whose binary representation is the integer $m$. Explicitly,

\begin{equation}
\ket{b_1b_2b_3b_4b_5b_6b_7}
\longmapsto
\ket{
64b_1+32b_2+16b_3+8b_4+4b_5+2b_6+b_7
}.
\label{eq:SM_steane_binary_decimal}
\end{equation}

For example,

\begin{align}
\ket{0000000}&\longmapsto\ket0,
&
\ket{0001111}&\longmapsto\ket{15},
\nonumber\\
\ket{0110011}&\longmapsto\ket{51},
&
\ket{0111100}&\longmapsto\ket{60},
\nonumber\\
\ket{1010101}&\longmapsto\ket{85},
&
\ket{1101001}&\longmapsto\ket{105}.
\end{align}

Using the ordered computational basis

\begin{equation}
\mathcal B_q
=
\{
\ket{0000000},
\ket{0000001},
\ldots,
\ket{1111111}
\},
\end{equation}

and the ordered multilevel basis

\begin{equation}
\mathcal B_{128}
=
\{
\ket0,\ket1,\ldots,\ket{127}
\},
\end{equation}

the matrix representation of $\Phi$ is numerically $I_{128}$. Its
domain and codomain are nevertheless physically distinct: the columns
represent states of seven physical qubits, whereas the rows represent
the energy levels of one one-hundred-and-twenty-eight-dimensional
system.

Because the complete seven-qubit Hilbert space is embedded into a
physical space of the same dimension,

\begin{equation}
\Phi^\dagger\Phi
=
I_{\mathcal H_q},
\qquad
\Phi\Phi^\dagger
=
I_{128}.
\label{eq:SM_steane_isometry}
\end{equation}

\paragraph{Embedded logical codewords.}

Applying $\Phi$ term by term to
Eq.~\eqref{eq:SM_steane_zero_explicit} gives

\begin{align}
\ket{\widetilde0_L}
&=
\Phi\ket{0_L}
\nonumber\\
&=
\frac{1}{\sqrt8}
\big(
\ket0
+\ket{15}
+\ket{51}
+\ket{60}
+\ket{85}
+\ket{90}
+\ket{102}
+\ket{105}
\big).
\label{eq:SM_steane_embedded_zero}
\end{align}

The transferred logical-$X$ operator is

\begin{equation}
\widetilde{\overline X}
=
\Phi\overline X\Phi^\dagger.
\label{eq:SM_steane_embedded_logical_X_definition}
\end{equation}

Because $\overline X=X^{\otimes7}$ complements all seven binary digits,

\begin{equation}
\widetilde{\overline X}
=
\sum_{m=0}^{127}
\ket{m\oplus127}\bra m,
\label{eq:SM_steane_embedded_logical_X}
\end{equation}

where $127=(1111111)_2$ and $\oplus$ denotes bitwise XOR. Therefore,

\begin{align}
\ket{\widetilde1_L}
&=
\Phi\ket{1_L}
=
\widetilde{\overline X}
\ket{\widetilde0_L}
\nonumber\\
&=
\frac{1}{\sqrt8}
\big(
\ket{127}
+\ket{112}
+\ket{76}
+\ket{67}
+\ket{42}
+\ket{37}
+\ket{25}
+\ket{22}
\big).
\label{eq:SM_steane_embedded_one}
\end{align}

The transferred logical-$Z$ operator is

\begin{align}
\widetilde{\overline Z}
&=
\Phi\overline Z\Phi^\dagger
\nonumber\\
&=
\sum_{m=0}^{127}
(-1)^{
b_1(m)+b_2(m)+b_3(m)+b_4(m)+
b_5(m)+b_6(m)+b_7(m)
}
\ket m\bra m.
\label{eq:SM_steane_embedded_logical_Z}
\end{align}

Every binary string appearing in $\ket{0_L}$ has even Hamming weight,
whereas every binary string appearing in $\ket{1_L}$ has odd Hamming
weight. Consequently,

\begin{align}
\widetilde{\overline Z}
\ket{\widetilde0_L}
&=
\ket{\widetilde0_L},
\nonumber\\
\widetilde{\overline Z}
\ket{\widetilde1_L}
&=
-\ket{\widetilde1_L},
\label{eq:SM_steane_logical_Z_action}
\end{align}

and

\begin{align}
\widetilde{\overline X}
\ket{\widetilde0_L}
&=
\ket{\widetilde1_L},
\nonumber\\
\widetilde{\overline X}
\ket{\widetilde1_L}
&=
\ket{\widetilde0_L}.
\label{eq:SM_steane_logical_X_action}
\end{align}

An arbitrary logical state

\begin{equation}
\ket{\psi_L}
=
\alpha\ket{0_L}
+
\beta\ket{1_L}
\end{equation}

is therefore represented as

\begin{equation}
\ket{\widetilde\psi_L}
=
\Phi\ket{\psi_L}
=
\alpha\ket{\widetilde0_L}
+
\beta\ket{\widetilde1_L},
\label{eq:SM_steane_general_embedded_state}
\end{equation}

with the logical amplitudes $\alpha$ and $\beta$ unchanged.

\paragraph{Transferred single-qubit Pauli operators.}

Before constructing the embedded stabilizers, we obtain the exact
multilevel representation of the virtual single-qubit Pauli
operators. Let

\begin{equation}
w_j
=
2^{7-j},
\qquad
(w_1,w_2,w_3,w_4,w_5,w_6,w_7)
=
(64,32,16,8,4,2,1),
\label{eq:SM_steane_binary_weights}
\end{equation}

and let $b_j(m)\in\{0,1\}$ denote the $j$th binary digit of $m$, with
$b_1$ the most significant digit.

A bit flip on virtual qubit $j$ complements only the corresponding
binary digit. Hence,

\begin{equation}
\widetilde X_j
=
\Phi X_j\Phi^\dagger
=
\sum_{m=0}^{127}
\ket{m\oplus w_j}\bra m.
\label{eq:SM_steane_embedded_X}
\end{equation}

Its matrix elements in the multilevel basis are

\begin{equation}
\bra n\widetilde X_j\ket m
=
\delta_{n,m\oplus w_j}.
\label{eq:SM_steane_X_matrix_elements}
\end{equation}

Thus, $\widetilde X_j$ is an exact permutation matrix. For example,

\begin{align}
\widetilde X_1
={}&
\ket{64}\bra0
+\ket{65}\bra1
+\cdots
+\ket{127}\bra{63}
\nonumber\\
&+
\ket0\bra{64}
+\ket1\bra{65}
+\cdots
+\ket{63}\bra{127}.
\label{eq:SM_steane_X1_explicit}
\end{align}

Accordingly,

\begin{equation}
\ket0
\stackrel{\widetilde X_1}{\longleftrightarrow}
\ket{64},
\qquad
\ket{15}
\stackrel{\widetilde X_1}{\longleftrightarrow}
\ket{79},
\end{equation}

which are the embedded images of

\begin{equation}
\ket{0000000}
\stackrel{X_1}{\longleftrightarrow}
\ket{1000000},
\qquad
\ket{0001111}
\stackrel{X_1}{\longleftrightarrow}
\ket{1001111}.
\end{equation}

Similarly,

\begin{equation}
\ket{15}
\stackrel{\widetilde X_4}{\longleftrightarrow}
\ket7,
\end{equation}

because $w_4=8$ and $15\oplus8=7$.

A phase flip on virtual qubit $j$ produces the phase
$(-1)^{b_j(m)}$. Its embedded representation is therefore

\begin{equation}
\widetilde Z_j
=
\Phi Z_j\Phi^\dagger
=
\sum_{m=0}^{127}
(-1)^{b_j(m)}
\ket m\bra m,
\label{eq:SM_steane_embedded_Z}
\end{equation}

with matrix elements

\begin{equation}
\bra n\widetilde Z_j\ket m
=
(-1)^{b_j(m)}
\delta_{nm}.
\label{eq:SM_steane_Z_matrix_elements}
\end{equation}

Finally,

\begin{equation}
\widetilde Y_j
=
\Phi Y_j\Phi^\dagger
=
i\widetilde X_j\widetilde Z_j,
\label{eq:SM_steane_embedded_Y}
\end{equation}

so that

\begin{equation}
\widetilde Y_j
=
i
\sum_{m=0}^{127}
(-1)^{b_j(m)}
\ket{m\oplus w_j}\bra m.
\label{eq:SM_steane_Y_explicit}
\end{equation}

Thus,

\[
\widetilde X_j
\text{ is a level permutation},\qquad
\widetilde Z_j
\text{ is diagonal},\qquad
\widetilde Y_j
\text{ is a signed level permutation}.
\]

Because the transfer is an isometric conjugation, the complete
single-qubit Pauli algebra is preserved:

\begin{align}
\widetilde X_j^2
&=
\widetilde Y_j^2
=
\widetilde Z_j^2
=
I_{128},
\nonumber\\
\widetilde X_j\widetilde Z_j
&=
-\widetilde Z_j\widetilde X_j,
\nonumber\\
[\widetilde P_j,\widetilde Q_k]
&=
0,
\qquad
j\neq k,
\label{eq:SM_steane_transferred_Pauli_algebra}
\end{align}

where $P,Q\in\{X,Y,Z\}$.

\paragraph{Construction of the embedded stabilizer generators.}

Each stabilizer generator is transferred according to

\begin{equation}
\widetilde S_r
=
\Phi S_r\Phi^\dagger,
\qquad
r=1,\ldots,6.
\label{eq:SM_steane_embedded_stabilizer_definition}
\end{equation}

Using $\Phi^\dagger\Phi=I_{\mathcal H_q}$ between successive factors,
the embedded generators are

\begin{align}
\widetilde S_1
&=
\widetilde X_4
\widetilde X_5
\widetilde X_6
\widetilde X_7,
&
\widetilde S_4
&=
\widetilde Z_4
\widetilde Z_5
\widetilde Z_6
\widetilde Z_7,
\nonumber\\
\widetilde S_2
&=
\widetilde X_2
\widetilde X_3
\widetilde X_6
\widetilde X_7,
&
\widetilde S_5
&=
\widetilde Z_2
\widetilde Z_3
\widetilde Z_6
\widetilde Z_7,
\nonumber\\
\widetilde S_3
&=
\widetilde X_1
\widetilde X_3
\widetilde X_5
\widetilde X_7,
&
\widetilde S_6
&=
\widetilde Z_1
\widetilde Z_3
\widetilde Z_5
\widetilde Z_7.
\label{eq:SM_steane_embedded_stabilizers}
\end{align}

The CSS structure is directly visible in
Eq.~\eqref{eq:SM_steane_embedded_stabilizers}: the first three
generators are pure level permutations, whereas the last three are
diagonal binary-phase operators.

We first derive $\widetilde S_1$ explicitly. Acting on a computational
basis state gives

\begin{align}
S_1
\ket{b_1b_2b_3b_4b_5b_6b_7}
=
\ket{
b_1,\,
b_2,\,
b_3,\,
b_4\oplus1,\,
b_5\oplus1,\,
b_6\oplus1,\,
b_7\oplus1
}.
\label{eq:SM_steane_S1_computational_action}
\end{align}

The fourth, fifth, sixth, and seventh digits have binary weights
$8,4,2,$ and $1$. Therefore,

\begin{equation}
\widetilde S_1\ket m
=
\ket{m\oplus8\oplus4\oplus2\oplus1}
=
\ket{m\oplus15}.
\label{eq:SM_steane_S1_level_action}
\end{equation}

Hence, the exact multilevel representation is

\begin{equation}
\boxed{
\widetilde S_1
=
\sum_{m=0}^{127}
\ket{m\oplus15}\bra m
}.
\label{eq:SM_steane_S1_permutation}
\end{equation}

Its matrix elements are

\begin{equation}
\bra n\widetilde S_1\ket m
=
\delta_{n,m\oplus15}.
\label{eq:SM_steane_S1_matrix_elements}
\end{equation}

For example,

\begin{align}
\widetilde S_1\ket0
&=
\ket{15},
&
\widetilde S_1\ket{15}
&=
\ket0,
\nonumber\\
\widetilde S_1\ket{51}
&=
\ket{60},
&
\widetilde S_1\ket{60}
&=
\ket{51},
\nonumber\\
\widetilde S_1\ket{85}
&=
\ket{90},
&
\widetilde S_1\ket{90}
&=
\ket{85},
\nonumber\\
\widetilde S_1\ket{102}
&=
\ket{105},
&
\widetilde S_1\ket{105}
&=
\ket{102}.
\label{eq:SM_steane_S1_examples}
\end{align}

These pairings show directly that

\begin{equation}
\widetilde S_1
\ket{\widetilde0_L}
=
\ket{\widetilde0_L}.
\end{equation}

The other two $X$-type generators are constructed identically. Their
binary masks are

\begin{align}
w_2\oplus w_3\oplus w_6\oplus w_7
&=
32\oplus16\oplus2\oplus1
=
51,
\nonumber\\
w_1\oplus w_3\oplus w_5\oplus w_7
&=
64\oplus16\oplus4\oplus1
=
85.
\end{align}

Therefore,

\begin{equation}
\boxed{
\widetilde S_2
=
\sum_{m=0}^{127}
\ket{m\oplus51}\bra m
},
\label{eq:SM_steane_S2_permutation}
\end{equation}

and

\begin{equation}
\boxed{
\widetilde S_3
=
\sum_{m=0}^{127}
\ket{m\oplus85}\bra m
}.
\label{eq:SM_steane_S3_permutation}
\end{equation}

We next derive the first $Z$-type stabilizer. Acting on a computational
basis state gives

\begin{align}
S_4
\ket{b_1b_2b_3b_4b_5b_6b_7}
=
(-1)^{b_4+b_5+b_6+b_7}
\ket{b_1b_2b_3b_4b_5b_6b_7}.
\label{eq:SM_steane_S4_computational_action}
\end{align}

After embedding,

\begin{equation}
\widetilde S_4\ket m
=
(-1)^{
b_4(m)+b_5(m)+b_6(m)+b_7(m)
}
\ket m.
\label{eq:SM_steane_S4_level_action}
\end{equation}

Thus,

\begin{equation}
\boxed{
\widetilde S_4
=
\sum_{m=0}^{127}
(-1)^{
b_4(m)+b_5(m)+b_6(m)+b_7(m)
}
\ket m\bra m
}.
\label{eq:SM_steane_S4_diagonal}
\end{equation}

Its matrix elements are

\begin{equation}
\bra n\widetilde S_4\ket m
=
(-1)^{
b_4(m)+b_5(m)+b_6(m)+b_7(m)
}
\delta_{nm}.
\label{eq:SM_steane_S4_matrix_elements}
\end{equation}

Every level appearing in $\ket{\widetilde0_L}$ contains an even number
of ones among the fourth through seventh binary digits. Hence,

\begin{equation}
\widetilde S_4
\ket{\widetilde0_L}
=
\ket{\widetilde0_L}.
\end{equation}

The remaining $Z$-type generators are

\begin{equation}
\boxed{
\widetilde S_5
=
\sum_{m=0}^{127}
(-1)^{
b_2(m)+b_3(m)+b_6(m)+b_7(m)
}
\ket m\bra m
},
\label{eq:SM_steane_S5_diagonal}
\end{equation}

and

\begin{equation}
\boxed{
\widetilde S_6
=
\sum_{m=0}^{127}
(-1)^{
b_1(m)+b_3(m)+b_5(m)+b_7(m)
}
\ket m\bra m
}.
\label{eq:SM_steane_S6_diagonal}
\end{equation}

Equations~\eqref{eq:SM_steane_S1_permutation}--
\eqref{eq:SM_steane_S6_diagonal} are the exact
one-hundred-and-twenty-eight-dimensional representations of the six
embedded stabilizer generators.

\paragraph{Verification that the transferred operators are stabilizers.}

The transferred generators satisfy all defining properties of
stabilizer generators. First, they are Hermitian:

\begin{equation}
\widetilde S_r^\dagger
=
\widetilde S_r,
\qquad
r=1,\ldots,6.
\label{eq:SM_steane_stabilizer_Hermitian}
\end{equation}

They are also involutions:

\begin{align}
\widetilde S_r^2
&=
\Phi S_r\Phi^\dagger
\Phi S_r\Phi^\dagger
\nonumber\\
&=
\Phi S_r^2\Phi^\dagger
\nonumber\\
&=
I_{128}.
\label{eq:SM_steane_stabilizer_involution}
\end{align}

Their eigenvalues are therefore restricted to $\pm1$. Moreover,

\begin{align}
[\widetilde S_r,\widetilde S_s]
&=
\Phi[S_r,S_s]\Phi^\dagger
\nonumber\\
&=
0,
\qquad
r,s=1,\ldots,6,
\label{eq:SM_steane_stabilizer_commutation}
\end{align}

because the original Steane-code stabilizers commute.

The commutation of the transferred $X$- and $Z$-type generators can
also be understood directly from the CSS structure. Every $X$-type
generator overlaps with every $Z$-type generator on an even number of
virtual qubits. Each overlapping $X$--$Z$ pair contributes one minus
sign, so the total sign is positive.

Most importantly, the embedded logical states are simultaneous $+1$
eigenstates. Since

\begin{equation}
S_r\ket{\psi_L}
=
\ket{\psi_L},
\qquad
r=1,\ldots,6,
\end{equation}

we obtain

\begin{align}
\widetilde S_r
\ket{\widetilde\psi_L}
&=
\Phi S_r\Phi^\dagger
\Phi\ket{\psi_L}
\nonumber\\
&=
\Phi S_r\ket{\psi_L}
\nonumber\\
&=
\Phi\ket{\psi_L}
\nonumber\\
&=
\ket{\widetilde\psi_L}.
\label{eq:SM_steane_embedded_stabilization}
\end{align}

In particular,

\begin{equation}
\widetilde S_r\ket{\widetilde0_L}
=
\ket{\widetilde0_L},
\qquad
\widetilde S_r\ket{\widetilde1_L}
=
\ket{\widetilde1_L},
\qquad
r=1,\ldots,6.
\label{eq:SM_steane_logical_states_stabilized}
\end{equation}

The joint $+1$ projector of the six embedded stabilizers is

\begin{equation}
\widetilde P_{\mathcal C}
=
\prod_{r=1}^{6}
\frac{I_{128}+\widetilde S_r}{2}.
\label{eq:SM_steane_joint_stabilizer_projector}
\end{equation}

Using $\widetilde S_r=\Phi S_r\Phi^\dagger$, this becomes

\begin{align}
\widetilde P_{\mathcal C}
&=
\Phi
\left[
\prod_{r=1}^{6}
\frac{I+S_r}{2}
\right]
\Phi^\dagger
\nonumber\\
&=
\Phi P_{\mathcal C}\Phi^\dagger.
\label{eq:SM_steane_projector_transfer}
\end{align}

Since the common $+1$ eigenspace of the original six independent
stabilizers is

\begin{equation}
\mathcal C
=
\operatorname{span}
\{
\ket{0_L},\ket{1_L}
\},
\end{equation}

the embedded projector is

\begin{equation}
\boxed{
\widetilde P_{\mathcal C}
=
\ket{\widetilde0_L}\bra{\widetilde0_L}
+
\ket{\widetilde1_L}\bra{\widetilde1_L}
}.
\label{eq:SM_steane_embedded_code_projector}
\end{equation}

Thus, the simultaneous $+1$ eigenspace of
$\widetilde S_1,\ldots,\widetilde S_6$ is exactly

\begin{equation}
\widetilde{\mathcal C}
=
\operatorname{span}
\{
\ket{\widetilde0_L},\ket{\widetilde1_L}
\}.
\label{eq:SM_steane_common_positive_eigenspace}
\end{equation}

This establishes that the transferred operators are genuine
stabilizer generators of the embedded Steane code and that the original
CSS code space is preserved exactly.

\paragraph{Measurement of the embedded stabilizers.}

In the original seven-qubit realization, syndrome extraction is
performed by measuring the six Pauli products $S_r$. In the
one-hundred-and-twenty-eight-level realization, one instead measures
the transferred two-outcome observables $\widetilde S_r$. The
projectors associated with outcomes $\pm1$ are

\begin{equation}
\widetilde\Pi_r^{(\pm)}
=
\frac{I_{128}\pm\widetilde S_r}{2}.
\label{eq:SM_steane_stabilizer_measurement_projectors}
\end{equation}

For a multilevel density operator $\widetilde\rho$, the probability of
obtaining outcome $\pm1$ is

\begin{equation}
p_r(\pm)
=
\operatorname{Tr}
\left[
\widetilde\Pi_r^{(\pm)}
\widetilde\rho
\right].
\label{eq:SM_steane_stabilizer_probabilities}
\end{equation}

For every embedded logical state $\widetilde\rho_L$,

\begin{equation}
\widetilde P_{\mathcal C}
\widetilde\rho_L
\widetilde P_{\mathcal C}
=
\widetilde\rho_L,
\end{equation}

and therefore

\begin{equation}
p_r(+)=1,
\qquad
r=1,\ldots,6.
\end{equation}

The six measurements consequently produce the no-error syndrome
$000000$ for every state in the embedded logical code space.

The CSS separation is also preserved operationally. The first three
measurements, corresponding to the embedded $X$-type stabilizers,
detect phase errors, whereas the final three measurements,
corresponding to the embedded $Z$-type stabilizers, detect bit-flip
errors.

\paragraph{Embedded error operators and syndrome preservation.}

For every correctable error $E_a\in\mathcal E_7$, define the embedded
operator

\begin{equation}
\widetilde E_a
=
\Phi E_a\Phi^\dagger.
\label{eq:SM_steane_transferred_error}
\end{equation}

Its action on an embedded logical state is

\begin{align}
\widetilde E_a
\ket{\widetilde\psi_L}
&=
\Phi E_a\Phi^\dagger
\Phi\ket{\psi_L}
\nonumber\\
&=
\Phi E_a\ket{\psi_L}.
\label{eq:SM_steane_error_intertwining}
\end{align}

Thus, the corrupted multilevel state is exactly the embedded image of
the corresponding corrupted seven-qubit state.

Let the six syndrome bits of $E_a$ be defined by

\begin{equation}
S_rE_a
=
(-1)^{s_r(a)}
E_aS_r,
\qquad
r=1,\ldots,6.
\label{eq:SM_steane_original_syndrome}
\end{equation}

Then

\begin{align}
\widetilde S_r\widetilde E_a
&=
\Phi S_r\Phi^\dagger
\Phi E_a\Phi^\dagger
\nonumber\\
&=
\Phi S_rE_a\Phi^\dagger
\nonumber\\
&=
(-1)^{s_r(a)}
\Phi E_aS_r\Phi^\dagger
\nonumber\\
&=
(-1)^{s_r(a)}
\widetilde E_a\widetilde S_r.
\label{eq:SM_steane_embedded_syndrome_relation}
\end{align}

Therefore,

\begin{equation}
\boxed{
\mathbf s(\widetilde E_a)
=
\mathbf s(E_a)
}.
\label{eq:SM_steane_syndrome_preservation}
\end{equation}

To display the CSS syndrome structure compactly, define the Hamming
columns

\begin{equation}
\mathbf h_1=001,\quad
\mathbf h_2=010,\quad
\mathbf h_3=011,\quad
\mathbf h_4=100,\quad
\mathbf h_5=101,\quad
\mathbf h_6=110,\quad
\mathbf h_7=111.
\label{eq:SM_steane_hamming_columns}
\end{equation}

With the syndrome ordered as

\begin{equation}
\mathbf s
=
(s_1,s_2,s_3,s_4,s_5,s_6),
\end{equation}

where the first three entries correspond to the $X$-type stabilizers
and the final three entries correspond to the $Z$-type stabilizers, one
has

\begin{align}
\mathbf s(X_j)
&=
(000,\mathbf h_j),
\nonumber\\
\mathbf s(Z_j)
&=
(\mathbf h_j,000),
\nonumber\\
\mathbf s(Y_j)
&=
(\mathbf h_j,\mathbf h_j).
\label{eq:SM_steane_compact_syndromes}
\end{align}

Thus, the $Z$-type stabilizers identify the location of an $X$ error,
the $X$-type stabilizers identify the location of a $Z$ error, and a
$Y$ error activates both syndrome components.

The full correctable-error syndrome table is

\begin{table}[h]
\caption{Six-bit syndromes of the $[[7,1,3]]$ Steane code, ordered as
$(S_1,S_2,S_3,S_4,S_5,S_6)$. The same syndrome assignments apply to
the transferred errors and stabilizers in the one-hundred-and-twenty-
eight-level realization.}
\label{tab:SM_steane_syndromes}
\begin{ruledtabular}
\begin{tabular}{cc@{\qquad}cc}
Error & Syndrome & Error & Syndrome\\
\hline
$I$   & $000000$ & $X_1$ & $000001$\\
$X_2$ & $000010$ & $X_3$ & $000011$\\
$X_4$ & $000100$ & $X_5$ & $000101$\\
$X_6$ & $000110$ & $X_7$ & $000111$\\
$Z_1$ & $001000$ & $Z_2$ & $010000$\\
$Z_3$ & $011000$ & $Z_4$ & $100000$\\
$Z_5$ & $101000$ & $Z_6$ & $110000$\\
$Z_7$ & $111000$ & $Y_1$ & $001001$\\
$Y_2$ & $010010$ & $Y_3$ & $011011$\\
$Y_4$ & $100100$ & $Y_5$ & $101101$\\
$Y_6$ & $110110$ & $Y_7$ & $111111$
\end{tabular}
\end{ruledtabular}
\end{table}

The six stabilizer measurements define $2^6=64$ simultaneous syndrome
sectors. The identity and the twenty-one single-qubit Pauli errors
occupy twenty-two distinct sectors. The remaining sectors correspond
to higher-weight error patterns or to other representatives of the
associated stabilizer cosets.

\paragraph{Explicit syndrome example.}

Consider the error $X_1$. From
Eq.~\eqref{eq:SM_steane_stabilizers}, $X_1$ commutes with the three
$X$-type generators $S_1,S_2,$ and $S_3$. It also commutes with
$S_4$ and $S_5$, whose support does not contain qubit $1$, but
anticommutes with $S_6$, which contains $Z_1$. Hence,

\begin{equation}
\mathbf s(X_1)
=
000001.
\label{eq:SM_steane_X1_syndrome}
\end{equation}

The embedded operators obey the identical relations:

\begin{align}
\widetilde S_1\widetilde X_1
&=
+\widetilde X_1\widetilde S_1,
\nonumber\\
\widetilde S_2\widetilde X_1
&=
+\widetilde X_1\widetilde S_2,
\nonumber\\
\widetilde S_3\widetilde X_1
&=
+\widetilde X_1\widetilde S_3,
\nonumber\\
\widetilde S_4\widetilde X_1
&=
+\widetilde X_1\widetilde S_4,
\nonumber\\
\widetilde S_5\widetilde X_1
&=
+\widetilde X_1\widetilde S_5,
\nonumber\\
\widetilde S_6\widetilde X_1
&=
-\widetilde X_1\widetilde S_6.
\label{eq:SM_steane_X1_embedded_commutation}
\end{align}

Acting on an arbitrary embedded logical state gives

\begin{align}
\widetilde S_r
\widetilde X_1
\ket{\widetilde\psi_L}
&=
+\widetilde X_1
\ket{\widetilde\psi_L},
\qquad
r=1,\ldots,5,
\nonumber\\
\widetilde S_6
\widetilde X_1
\ket{\widetilde\psi_L}
&=
-\widetilde X_1
\ket{\widetilde\psi_L}.
\label{eq:SM_steane_X1_eigenvalues}
\end{align}

Measurement of the six embedded stabilizers therefore produces

\begin{equation}
(+1,+1,+1,+1,+1,-1),
\end{equation}

or equivalently,

\begin{equation}
\mathbf s(\widetilde X_1)
=
000001.
\end{equation}

As a phase-error example, $Z_1$ anticommutes only with $S_3$ among the
six generators. Hence,

\begin{equation}
\mathbf s(Z_1)
=
001000,
\qquad
\mathbf s(\widetilde Z_1)
=
001000.
\end{equation}

Because $Y_1=iX_1Z_1$, its syndrome is the bitwise sum of the $X_1$
and $Z_1$ syndromes:

\begin{equation}
\mathbf s(Y_1)
=
001001,
\qquad
\mathbf s(\widetilde Y_1)
=
001001.
\end{equation}

These examples explicitly demonstrate the preservation of the CSS
separation between bit-flip and phase-flip syndrome information.

More generally, the embedded syndrome projector associated with

\begin{equation}
\mathbf s
=
(s_1,s_2,s_3,s_4,s_5,s_6)
\end{equation}

is

\begin{equation}
\widetilde P_{\mathbf s}
=
\prod_{r=1}^{6}
\frac{
I_{128}
+
(-1)^{s_r}\widetilde S_r
}{2}.
\label{eq:SM_steane_syndrome_projector}
\end{equation}

For an error $E_a$ with syndrome $\mathbf s(a)$,

\begin{equation}
\widetilde P_{\mathbf s(a)}
\widetilde E_a
\ket{\widetilde\psi_L}
=
\widetilde E_a
\ket{\widetilde\psi_L},
\label{eq:SM_steane_correct_syndrome_projection}
\end{equation}

whereas

\begin{equation}
\widetilde P_{\mathbf t}
\widetilde E_a
\ket{\widetilde\psi_L}
=
0,
\qquad
\mathbf t\neq\mathbf s(a).
\label{eq:SM_steane_wrong_syndrome_projection}
\end{equation}

Thus, the simultaneous eigenspaces of the transferred stabilizers
reproduce exactly the syndrome sectors occupied by the correctable
single-qubit errors.

\paragraph{Embedded recovery operations.}

Let $R_{\mathbf s}$ denote the recovery operation associated with
syndrome $\mathbf s$. Its embedded representative is

\begin{equation}
\widetilde R_{\mathbf s}
=
\Phi R_{\mathbf s}\Phi^\dagger.
\label{eq:SM_steane_embedded_recovery}
\end{equation}

For every single-qubit Pauli error $E_a\in\mathcal E_7$, the syndrome
$\mathbf s(a)$ uniquely identifies the error within the correctable
family. One may therefore choose

\begin{equation}
R_{\mathbf s(a)}
=
E_a^\dagger.
\end{equation}

Because Pauli operators are Hermitian and unitary,

\begin{equation}
E_a^\dagger
=
E_a,
\qquad
E_a^2
=
I,
\end{equation}

and hence

\begin{equation}
\widetilde R_{\mathbf s(a)}
=
\widetilde E_a.
\label{eq:SM_steane_embedded_recovery_operator}
\end{equation}

For an arbitrary embedded logical state,

\begin{align}
\widetilde R_{\mathbf s(a)}
\widetilde E_a
\ket{\widetilde\psi_L}
&=
\Phi E_a^\dagger\Phi^\dagger
\Phi E_a\Phi^\dagger
\Phi\ket{\psi_L}
\nonumber\\
&=
\Phi E_a^\dagger E_a
\ket{\psi_L}
\nonumber\\
&=
\Phi\ket{\psi_L}
\nonumber\\
&=
\ket{\widetilde\psi_L}.
\label{eq:SM_steane_exact_state_recovery}
\end{align}

For an embedded logical density operator

\begin{equation}
\widetilde\rho_L
=
\Phi\rho_L\Phi^\dagger,
\end{equation}

the same result is

\begin{align}
\widetilde R_{\mathbf s(a)}
\widetilde E_a
\widetilde\rho_L
\widetilde E_a^\dagger
\widetilde R_{\mathbf s(a)}^\dagger
&=
\widetilde\rho_L.
\label{eq:SM_steane_exact_recovery}
\end{align}

As a concrete example, if the measured syndrome is $000001$, the
identified error is $\widetilde X_1$. The associated recovery is

\begin{equation}
\widetilde R_{000001}
=
\widetilde X_1.
\end{equation}

Therefore,

\begin{align}
\widetilde R_{000001}
\widetilde X_1
\ket{\widetilde\psi_L}
&=
\widetilde X_1^2
\ket{\widetilde\psi_L}
\nonumber\\
&=
\ket{\widetilde\psi_L}.
\label{eq:SM_steane_X1_recovery}
\end{align}

Similarly, the syndrome $001000$ identifies
$\widetilde Z_1$, and the corresponding recovery is

\begin{equation}
\widetilde R_{001000}
=
\widetilde Z_1.
\end{equation}

For the trivial syndrome $000000$, no correction is required, and one
may take

\begin{equation}
\widetilde R_{000000}
=
I_{128}.
\end{equation}

\paragraph{Conclusion of the seven-qubit example.}

The one-hundred-and-twenty-eight-level realization preserves the
complete error-correction structure of the $[[7,1,3]]$ Steane code. In
particular,

\begin{enumerate}
\item the logical codewords are mapped isometrically to
      $\ket{\widetilde0_L}$ and $\ket{\widetilde1_L}$;

\item the transferred generators
      $\widetilde S_1,\ldots,\widetilde S_6$ are commuting Hermitian
      involutions whose simultaneous $+1$ eigenspace is exactly
      $\widetilde{\mathcal C}$;

\item the separation into three $X$-type and three $Z$-type
      stabilizers is preserved, retaining the complete CSS structure;

\item every single-qubit Pauli error is mapped to an explicit level
      permutation, binary-phase operator, or signed level permutation;

\item all twenty-two syndrome assignments associated with the
      identity and the twenty-one single-qubit Pauli errors are
      preserved exactly; and

\item the transferred recovery operator restores the original embedded
      logical state.
\end{enumerate}

The embedded operators therefore reproduce the full logical,
stabilizer, CSS-syndrome, and recovery structure of the seven-qubit
Steane code within a single one-hundred-and-twenty-eight-level system.

This conclusion applies to the transferred error family
$\widetilde{\mathcal E}_7$. It does not by itself imply correction of
every physical operator acting on an arbitrary multilevel extension.
If the physical Hilbert space has dimension $D>128$, errors that couple
the embedded computational sector to additional physical levels
represent genuine leakage processes and must be analyzed separately
using the corresponding physical KL conditions.

\section{S5. Physical-Leakage Syndrome Construction and Recovery}
\label{sec:SM_physical_leakage}

\subsection{Canonical syndrome sectors and recovery}

We now construct the syndrome measurement and recovery operation
associated with the physical-leakage condition established in the main
text. Let
\(\widetilde{\mathcal C}\subseteq\mathcal H_{\mathrm{emb}}\)
be the embedded code space and let \(P\) denote its orthogonal
projector. The physical Hilbert space is decomposed as
\(\mathcal H_D=\mathcal H_{\mathrm{emb}}\oplus
\mathcal H_{\mathrm{leak}}\), where
\(\mathcal H_{\mathrm{leak}}\) contains the physical levels outside the
embedded computational manifold. For each physical error \(F_a\), its
outward-leakage component is
\[
L_a
=
P_{\mathrm{leak}}F_aP_{\mathrm{emb}},
\]
so that \(L_aP\) maps an encoded state from
\(\widetilde{\mathcal C}\) into
\(\mathcal H_{\mathrm{leak}}\).

The main text proves that, once the in-manifold errors are correctable,
the complete physical error family is exactly correctable if and only
if
\[
PL_a^\dagger L_bP
=
\beta_{ab}P
\]
for all \(a,b\). Here, \(\beta=[\beta_{ab}]\) is the
leakage-overlap matrix. Its entries describe the overlaps between the
states produced by different leakage operators on the code space. The
main text also proves that \(\beta\) is Hermitian and positive
semidefinite.

Because \(\beta\) is Hermitian and positive semidefinite, it can be
diagonalized as
\[
U^\dagger\beta U
=
\Lambda,
\qquad
\Lambda_{\mu\nu}
=
\lambda_\mu\delta_{\mu\nu},
\qquad
\lambda_\mu\geq0,
\]
where \(U\) is unitary, \(\Lambda\) is diagonal, and
\(\lambda_\mu\) are the nonnegative eigenvalues of \(\beta\). The
columns of \(U\) define linear combinations of the original leakage
operators,
\[
M_\mu
=
\sum_a U_{a\mu}L_a,
\]
which we call the canonical leakage operators. Since this
transformation is unitary, the families \(\{L_a\}\) and
\(\{M_\mu\}\) span the same physical error space.

Using the leakage condition, the canonical operators satisfy
\[
PM_\mu^\dagger M_\nu P
=
\sum_{a,b}
U_{a\mu}^{*}U_{b\nu}
PL_a^\dagger L_bP
=
\lambda_\mu\delta_{\mu\nu}P.
\]
Thus, different canonical leakage modes have orthogonal action on the
code. If \(\lambda_\mu=0\), then
\(M_\mu P=0\), so that mode does not act on the encoded space and may
be omitted. For every \(\lambda_\mu>0\), define
\[
V_\mu
=
\frac{1}{\sqrt{\lambda_\mu}}M_\mu P.
\]
The operator \(V_\mu\) maps the embedded code into the physical
leakage sector and satisfies
\[
V_\mu^\dagger V_\nu
=
\delta_{\mu\nu}P.
\]
Consequently, each \(V_\mu\) is an isometry on
\(\widetilde{\mathcal C}\), while different values of \(\mu\) have
mutually orthogonal ranges.

The leakage-syndrome sector associated with mode \(\mu\) is therefore
defined by
\[
\mathcal S_\mu
=
V_\mu\widetilde{\mathcal C}.
\]
For arbitrary encoded states
\(\ket{\widetilde\psi_L}\) and
\(\ket{\widetilde\phi_L}\),
\[
\left\langle
V_\mu\widetilde\psi_L
\middle|
V_\nu\widetilde\phi_L
\right\rangle
=
\delta_{\mu\nu}
\braket{\widetilde\psi_L|\widetilde\phi_L}.
\]
Hence the sectors \(\mathcal S_\mu\) are mutually orthogonal, and each
contains an undistorted isometric copy of the original logical state.

The orthogonal projector onto \(\mathcal S_\mu\) is
\[
\Pi_\mu
=
V_\mu V_\mu^\dagger
=
\frac{1}{\lambda_\mu}
M_\mu P M_\mu^\dagger.
\]
Indeed, \(V_\mu^\dagger V_\mu=P\) implies
\(\Pi_\mu^2=\Pi_\mu\), while
\(V_\mu^\dagger V_\nu=\delta_{\mu\nu}P\) gives
\[
\Pi_\mu\Pi_\nu
=
\delta_{\mu\nu}\Pi_\mu.
\]
The projectors \(\Pi_\mu\) therefore distinguish the canonical leakage
syndromes without resolving the logical state stored within each
sector.

Let
\[
Q_{\mathrm{corr}}
=
P+\sum_{\mu:\lambda_\mu>0}\Pi_\mu
\]
be the projector onto the complete correctable subspace, consisting of
the original code space and all correctable leakage-syndrome sectors.
The projector onto the unused remainder of the physical Hilbert space
is
\[
\Pi_{\mathrm{rest}}
=
I_D-Q_{\mathrm{corr}}.
\]
The corresponding projective syndrome measurement is
\[
\mathsf M_{\mathrm{leak}}
=
\left\{
P,\,
\Pi_\mu\;(\lambda_\mu>0),\,
\Pi_{\mathrm{rest}}
\right\}.
\]
The outcome \(P\) indicates that the state lies in the code space, the
outcome \(\Pi_\mu\) identifies leakage into the canonical syndrome
sector \(\mathcal S_\mu\), and
\(\Pi_{\mathrm{rest}}\) collects states outside the designated
correctable subspace.

To see explicitly how the measurement identifies the leakage mode,
note that
\[
M_\nu P
=
\sqrt{\lambda_\nu}V_\nu.
\]
Therefore,
\[
\Pi_\mu M_\nu P
=
V_\mu V_\mu^\dagger
\sqrt{\lambda_\nu}V_\nu
=
\delta_{\mu\nu}M_\nu P.
\]
Thus, a state produced by \(M_\nu\) lies entirely in
\(\mathcal S_\nu\) and produces the syndrome outcome \(\nu\) with
certainty, conditioned on that error branch. Importantly, the
measurement determines only the leakage label and does not distinguish
the logical amplitudes within the state.

After obtaining syndrome \(\mu\), recovery is performed by the inverse
isometry
\[
R_\mu
=
V_\mu^\dagger.
\]
Its action on the canonical leakage modes is
\[
R_\mu M_\nu P
=
\sqrt{\lambda_\nu}\,
\delta_{\mu\nu}P.
\]
For a normalized encoded state
\(\ket{\widetilde\psi_L}\), the unnormalized state produced by
\(M_\mu\) is
\[
M_\mu\ket{\widetilde\psi_L}
=
\sqrt{\lambda_\mu}\,
V_\mu\ket{\widetilde\psi_L}.
\]
After syndrome detection and normalization, the state is
\(V_\mu\ket{\widetilde\psi_L}\). Applying \(R_\mu\) gives
\[
R_\mu V_\mu\ket{\widetilde\psi_L}
=
P\ket{\widetilde\psi_L}
=
\ket{\widetilde\psi_L},
\]
so the logical state, including all amplitudes and coherences, is
restored exactly.

The syndrome measurement and conditional recovery can be combined into
a completely positive trace-preserving map. Let \(\tau\) be any fixed
density operator, chosen for example to be a state in the embedded code
space. Then
\[
\mathcal R_{\mathrm{leak}}(X)
=
PXP
+
\sum_{\mu:\lambda_\mu>0}
V_\mu^\dagger\Pi_\mu X\Pi_\mu V_\mu
+
\operatorname{Tr}
\!\left(
\Pi_{\mathrm{rest}}X
\right)\tau
\]
defines a recovery channel on the full physical Hilbert space. The
first term leaves states already in the code space unchanged, the
second returns each correctable leakage sector to the code through the
corresponding inverse isometry, and the final term assigns an arbitrary
fixed output to the uncorrectable remainder. The latter does not affect
the action of the channel on the designated correctable sectors.

\paragraph*{Example 1: embedded three-qubit repetition code.}

Consider the three-qubit repetition code with logical states
\[
\ket{0_L}=\ket{000},
\qquad
\ket{1_L}=\ket{111}.
\]
Under the binary-to-decimal embedding, these states become
\[
\ket{\widetilde 0_L}=\ket{0},
\qquad
\ket{\widetilde 1_L}=\ket{7},
\]
so that
\[
\widetilde{\mathcal C}
=
\operatorname{span}
\{\ket{0},\ket{7}\},
\qquad
P
=
\ket{0}\bra{0}
+
\ket{7}\bra{7}.
\]

Assume that the physical system contains the additional levels
\(\ket{8},\ldots,\ket{13}\), and consider the outward-leakage
operators
\[
L_1
=
\ket{8}\bra{0}
+
\ket{9}\bra{7},
\qquad
L_2
=
\ket{10}\bra{0}
+
\ket{11}\bra{7},
\qquad
L_3
=
\ket{12}\bra{0}
+
\ket{13}\bra{7}.
\]
Each operator transfers both logical basis states into the same
two-dimensional leakage sector while preserving their relative
amplitudes. For an arbitrary encoded state
\[
\ket{\widetilde\psi_L}
=
c_0\ket{0}
+
c_1\ket{7},
\qquad
|c_0|^2+|c_1|^2=1,
\]
their action is
\[
L_1\ket{\widetilde\psi_L}
=
c_0\ket{8}+c_1\ket{9},
\]
\[
L_2\ket{\widetilde\psi_L}
=
c_0\ket{10}+c_1\ket{11},
\]
and
\[
L_3\ket{\widetilde\psi_L}
=
c_0\ket{12}+c_1\ket{13}.
\]

A direct calculation gives
\[
PL_i^\dagger L_jP
=
\delta_{ij}P.
\]
Thus the leakage-overlap matrix is
\(\beta=I_3\). It is already diagonal, so \(U=I_3\),
\(\lambda_i=1\), and the original leakage operators are themselves the
canonical operators:
\[
M_i=L_i,
\qquad
V_i=L_iP=L_i.
\]

The three orthogonal syndrome sectors are
\[
\mathcal S_1
=
\operatorname{span}\{\ket{8},\ket{9}\},
\qquad
\mathcal S_2
=
\operatorname{span}\{\ket{10},\ket{11}\},
\qquad
\mathcal S_3
=
\operatorname{span}\{\ket{12},\ket{13}\}.
\]
Their projectors are
\[
\Pi_1
=
\ket{8}\bra{8}
+
\ket{9}\bra{9},
\]
\[
\Pi_2
=
\ket{10}\bra{10}
+
\ket{11}\bra{11},
\]
and
\[
\Pi_3
=
\ket{12}\bra{12}
+
\ket{13}\bra{13}.
\]
The syndrome measurement
\[
\left\{
P,\Pi_1,\Pi_2,\Pi_3,\Pi_{\mathrm{rest}}
\right\}
\]
therefore determines which pair of leakage levels is occupied without
measuring the coefficients \(c_0\) and \(c_1\).

The corresponding inverse isometries are
\[
R_1
=
\ket{0}\bra{8}
+
\ket{7}\bra{9},
\]
\[
R_2
=
\ket{0}\bra{10}
+
\ket{7}\bra{11},
\]
and
\[
R_3
=
\ket{0}\bra{12}
+
\ket{7}\bra{13}.
\]
They satisfy
\[
R_iL_jP
=
\delta_{ij}P.
\]
For example, if \(L_2\) acts, the state becomes
\[
L_2\ket{\widetilde\psi_L}
=
c_0\ket{10}
+
c_1\ket{11}.
\]
The syndrome measurement returns outcome \(2\), after which
\[
R_2L_2\ket{\widetilde\psi_L}
=
c_0\ket{0}
+
c_1\ket{7}
=
\ket{\widetilde\psi_L}.
\]
The leakage syndrome is therefore identified and corrected without
revealing or disturbing the encoded logical information.
\paragraph*{Example 2: Binomial Bosonic Code}

To demonstrate that the leakage-correction framework is not restricted
to finite-dimensional stabilizer codes, we consider the single-mode
binomial code \cite{Michael2016}

\[
|0_L\rangle
=
\frac{|0\rangle+|4\rangle}{\sqrt2},
\qquad
|1_L\rangle
=
|2\rangle,
\]

which corrects the error set \(\{I,a\}\), where \(a\) is the bosonic
annihilation operator.

Let

\[
\mathcal C
=
{\rm span}
\{
|0_L\rangle,
|1_L\rangle
\},
\]

with projector

\[
P
=
|0_L\rangle\langle0_L|
+
|1_L\rangle\langle1_L|.
\]

We choose

\[
\mathcal H_{\rm emb}
=
{\rm span}
\{
|0\rangle,|1\rangle,|2\rangle,|3\rangle,|4\rangle
\},
\]

and define

\[
\mathcal H_D
=
\mathcal H_{\rm emb}
\oplus
\mathcal H_{\rm leak}.
\]

Consider the outward leakage operators

\[
L_1
=
|5\rangle\langle0_L|
+
|6\rangle\langle1_L|,
\]

\[
L_2
=
|7\rangle\langle0_L|
+
|8\rangle\langle1_L|.
\]

These satisfy

\[
L_\mu
=
P_{\rm leak}
L_\mu
P_{\rm emb},
\qquad
\mu=1,2,
\]

and therefore represent genuine physical leakage.

For an arbitrary logical state

\[
|\psi_L\rangle
=
c_0|0_L\rangle
+
c_1|1_L\rangle,
\]

the leakage operators produce

\[
L_1|\psi_L\rangle
=
c_0|5\rangle
+
c_1|6\rangle,
\]

\[
L_2|\psi_L\rangle
=
c_0|7\rangle
+
c_1|8\rangle.
\]

A direct calculation gives

\[
PL_\mu^\dagger L_\nu P
=
\delta_{\mu\nu}P,
\]

so that

\[
\beta_{\mu\nu}
=
\delta_{\mu\nu}.
\]

The Leakage Knill--Laflamme condition is therefore satisfied exactly.

The corresponding canonical leakage syndrome sectors are

\[
\mathcal S_1
=
{\rm span}
\{|5\rangle,|6\rangle\},
\]

\[
\mathcal S_2
=
{\rm span}
\{|7\rangle,|8\rangle\}.
\]

The syndrome projectors are

\[
\Pi_1
=
|5\rangle\langle5|
+
|6\rangle\langle6|,
\]

\[
\Pi_2
=
|7\rangle\langle7|
+
|8\rangle\langle8|,
\]

and satisfy

\[
\Pi_\mu\Pi_\nu
=
\delta_{\mu\nu}\Pi_\mu.
\]

The corresponding recovery operators are

\[
R_1
=
|0_L\rangle\langle5|
+
|1_L\rangle\langle6|,
\]

\[
R_2
=
|0_L\rangle\langle7|
+
|1_L\rangle\langle8|.
\]

They obey

\[
R_\mu L_\nu P
=
\delta_{\mu\nu}P,
\]

which implies

\[
R_\mu
L_\mu
|\psi_L\rangle
=
|\psi_L\rangle.
\]

Hence the logical state is recovered perfectly after leakage detection.
This example demonstrates that the Leakage Knill--Laflamme theorem
applies equally to bosonic quantum error-correcting codes and is not
restricted to finite-dimensional stabilizer encodings.

\end{document}